\setlist[description]{leftmargin=0em,labelindent=\parindent}
\theoremstyle{definition}
\theoremstyle{definition}
\newtheorem{prop}{Proposition}
\titlespacing*{\section}{0pt}{1em}{1em}
\titlespacing*{\subsection}{0pt}{1em}{1em}
\newenvironment{figurenotes}[1][Note]{\begin{minipage}[t]{\linewidth}\footnotesize{\itshape#1: }}{\end{minipage}}
\definecolor{csub-blue}{RGB}{0, 53, 148}
\def\@fnsymbol#1{\ensuremath{\ifcase#1\or \mathsection\or \mathparagraph\or \|\or **\or \dagger\dagger \or \ddagger\ddagger \else\@ctrerr\fi}}  
\title{\textsf{Responses of Unemployment to Productivity Changes\\for a General Matching Technology}}
\author{Rich Ryan\thanks{Email: \href{mailto:richryan@csub.edu}{RichRyan@csub.edu}.
  Department of Economics, California State University, Bakersfield, Bakersfield, CA 93311, USA.}}
\date{November 9, 2023}
\begin{document}
\maketitle
\thispagestyle{empty}

\begin{abstract}
Workers separate from jobs, search for jobs, accept jobs, and fund consumption with their wages.
Firms recruit workers to fill vacancies.
Search frictions prevent firms from instantly hiring available workers. 
Unemployment persists.
These features are described by the Diamond--Mortensen--Pissarides modeling framework.
In this class of models,
how unemployment responds to productivity changes depends on resources that can be allocated to job creation.
Yet, this characterization has been made when matching
is parameterized by a Cobb--Douglas technology.
For a canonical DMP model, I
(1)  demonstrate that a unique steady-state equilibrium will exist as long as the initial vacancy yields a positive surplus;
(2)  characterize responses of unemployment to productivity changes for a general matching technology; and
(3) show how a matching technology that is not Cobb--Douglas implies unemployment responds more to productivity changes,
which is independent of resources available for job creation,
a feature that will be of interest to business-cycle researchers.
\end{abstract}

\textbf{Keywords}: Unemployment, unemployment volatility, matching models, matching technology, search frictions, market tightness, business cycle, productivity,
job search, job finding, fundamental surplus.

\textbf{JEL Codes}: E23, 
E24, 
E32, 
J41, 
J63, 
J64

\vfill

\pagebreak
\clearpage

\pagenumbering{arabic}

\section{Introduction}

Each month,
workers actively search for jobs and firms actively recruit workers.
Despite
workers being available to start jobs and
firms having posted job openings for vacant positions that could start within 30 days,
in any given month,
millions of unemployed workers cannot find jobs and millions of vacancies go unfilled.
It takes time for a worker to sift through job boards and fill out applications.
And
it is costly for a firm to post a vacancy.
Search frictions give rise to unemployment: a firm cannot instantly hire a worker.
A large class of models known as DMP models---short for Diamond--Mortensen--Pissarides models---use these features to study unemployment dynamics.

Within this class of models,
\citet{ljungqvist_sargent_2017} show that
the response of unemployment to changes in productivity
depends almost entirely on resources that can be allocated to vacancy creation.
To understand this idea, consider what happens when a job is created. 

Imagine that Eva is unemployed.
While unemployed,
Eva
searches for work,
cooks, cleans, and
collects unemployment-insurance benefits,
which in total amounts to $z$.
Upon finding a job, Eva uses the technology at the firm to produce $y$.
The match generates $y-z$.
This amount---at least some of it---can be allocated to vacancy creation.
\citet{ljungqvist_sargent_2017} call $y-z$ the fundamental surplus.
Viewed as a fraction of output, $\left( y-z \right) / y$, 
potential resources for vacancy creation are increasing in $y$.
The derivative with respect to $y$ is $z/y^{2}$.
The change will be large only when $z$ is large, which occurs when $z$ is close to $y$.
This is crucial:

The fundamental surplus must be small to produce high unemployment volatility
during business cycles, because
a change in productivity will generate a large change in
resources devoted to vacancy creation.
Other factors that affect unemployment 
can be ignored because they are bounded by
``a consensus about reasonable parameter values'' \citep[][2636]{ljungqvist_sargent_2017}.\footnote{Two essential contributions to the development of DMP models were made by \citet{pissarides_1985} and \citet{mortensen_pissarides_1994}.
  \citet{diamond_1982,diamond_1982b} made fundamental earlier contributions.
  See \citet{pissarides_2000}, the \citet{nobel_2010}, and \citet{petrosky-nadeau_wasmer_2017} for further background.}

There are two key issues:

\begin{description}
\item[Decomposition for general matching technology.]
  For the DMP class of models, \citet{ljungqvist_sargent_2017}
  establish that responses of unemployment to productivity changes
  depend on two factors.
  The two-factor decomposition, though, is based on workers and firms
  forming productive matches through a particular form of
  constant-returns-to-scale technology: Cobb--Douglas.
  Do conclusions about the
  two-factor, multiplicative decomposition hold for a general matching
  technology?
\item[Whether matching technology matters.]
  Under the decomposition,
  only one of the two factors is economically meaningful:
  the upper bound on resources, as a fraction of output,
  that the invisible hand can allocate to vacancy creation---the
  fundamental surplus fraction.
  The factor that does not matter includes an economy's matching technology.
  Is matching technology inconsequential for labor-market volatility?
\end{description}

The analysis here adopts the DMP framework and
a matching technology that exhibits constant returns to scale and satisfies standard regularity assumptions.
Within this framework,
a ratio of vacancies to unemployment---or labor-market tightness---drives unemployment dynamics.
I establish that \citeauthor{ljungqvist_sargent_2017}'s \citeyearpar{ljungqvist_sargent_2017}
two-factor, multiplicative decomposition of
the elasticity of tightness with respect to productivity holds
for a general matching technology.
The factor that includes the fundamental surplus matters, as predicted, but,
unlike the Cobb--Douglas case,
the second factor is not bounded by professional consensus.
Instead,
the second factor is bounded by
the elasticity of matching with respect to unemployment.
There is no reason for this bound to be constant over the business cycle and
the second factor---and therefore matching technology---could influence unemployment dynamics.

For conventional parameters, however,
I find that the fundamental surplus is significantly more meaningful.
Which suggests an economy's matching technology does not influence unemployment volatility.
Nevertheless, there is scope for matching technology to matter some.
Going beyond the local analysis of the elasticity at steady-state,
in a comparative steady-state analysis as a shortcut for analyzing model dynamics,
I show that a non-Cobb--Douglas matching technology
delivers larger responses of unemployment to productivity changes.
This matching technology, with nonconstant elasticity of matching with respect to unemployment,
offers a partial solution to the Shimer or unemployment-volatility puzzle,
the failure of the canonical DMP model to match the observed volatility of unemployment \citep{shimer_2005,pissarides_2009}.

In addition,
I provide an alternative interpretation of the existence and uniqueness
of an economy's non-stochastic, steady-state equilibrium.
An equilibrium will exist 
as long as it is profitable to post an initial vacancy.
As far as I know,
a unique equilibrium is typically posited to exist, which often holds for good economic reasons
(see, e.g., \citeauthor{pissarides_2000} [\citeyear[][19--20]{pissarides_2000}]). 
While my idea may be well understood by DMP researchers,
a benefit is providing an explicit requirement for parameter values and range in which equilibrium 
tightness is guaranteed to fall.

Understanding how a matching technology fits into
interpretations of the fundamental surplus is important.
The fundamental surplus, as \citet[][49]{ljungqvist_sargent_2021} emphasize,
offers a single channel for explaining how diverse features like
``sticky wages, elevated utility of leisure,
bargaining protocols that suppress the influence of outside values,
a frictional credit market that gives rise to a financial accelerator,
fixed matching costs, and government policies
like unemployment benefits and layoff costs''
can generate high unemployment volatility during business cycles.
All these features may interact with matching.

The DMP class of models explicates unemployment,
which is a source of misery for many people.
Understanding how matching affects unemployment dynamics in this class of models will help policymakers improve public policies that affect the unemployed.

\section{Model Environment}
\label{sol:model-environment}

To characterize unemployment dynamics,
I begin with a canonical DMP model.
The basic features include
linear utility,
random search,
workers with identical capacities for work,
wages determined as the outcome of Nash bargaining, 
job creation that drives the value of posting a vacancy to zero, and
exogenous separations.
The environment differs from the one studied by \citet{ljungqvist_sargent_2017} in a single way:
Instead of specifying the matching technology as Cobb--Douglas,
I use a general matching technology.
The
matching technology exhibits constant returns to scale in vacancies and the number of unemployed workers;
probabilities of finding and filling a job fall within $0$ and $1$; and
certain regularity conditions for limiting behavior hold.\footnote{After the ``preliminaries'' of describing the economic environment,
  \citet[][2635]{ljungqvist_sargent_2017}
  ``proceed under the assumption that the matching function has the Cobb--Douglas form, $M \left( u,v \right) = A u^{\alpha}v^{1-\alpha}$,
  where $A > 0$, and $\alpha \in \left( 0,1 \right)$ is the constant elasticity of matching with respect to unemployment,
  $\alpha = - q^{\prime} \left( \theta \right) \theta / q \left( \theta \right)$.''}

The main result establishes that
the fundamental insights of \citet{ljungqvist_sargent_2017} hold. 

\subsection{Description of the Model Environment}
\label{sec:model-description}

The environment is populated by a unit measure of identical, infinitely-lived workers.
Workers
are risk neutral with a discount factor of $\beta = \left(1+r\right)^{-1}$ and
are either employed or unemployed.
They aim to maximize discounted income.
Employed workers earn labor income.
Unemployed workers
earn no labor income,
look for work, and
experience the value of nonwork,
denoted by $z>0$.

The environment is also populated my a large measure of firms.
Firms are either active or inactive.
An active firm is either in a productive match with a worker or actively recruiting.
An inactive firm becomes active by posting a vacancy, which incurs a cost each period.

Once matched with a worker,
a firm operates a production technology that converts an indivisible unit of labor into $y$ units of output.
The production technology exhibits constant returns to scale in labor.
Each active firm matched with a worker employs a single worker.
While matched, a firm earns $y-w$, where $w$ is the per-period wage paid to the worker.
Wages are determined by the outcome of Nash bargaining.

All matches are exogenously destroyed with per-period probability $s$.
Free entry by the large measure of firms implies that a firm's expected discounted value of posting a vacancy equals zero.

A matching function $M$ determines the number of successful matches in a period.
Its arguments are the aggregate measures of unemployed workers, $u$, and vacancies, $v$.
The function $M \left( u,v \right)$ is increasing in both its arguments. 
More workers searching for jobs for a given level of vacancies leads to more matches and
more vacancies for a given level of unemployment leads to more matches.
In addition,
$M$ exhibits constant returns to scale in $u$ and $v$.

Labor-market tightness, $\theta$, is defined as the ratio of vacancies to unemployed workers, $\theta \coloneq v/u$.
Under random matching,
the probability that a firm fills a vacancy is given by
$q\left(\theta\right) \coloneq M\left(u,v\right)/v = M \left(\theta^{-1},1\right)$ and
the probability that an unemployed worker matches with a firm is given by
$\theta q\left(\theta\right) = M\left(u,v\right)/u = M\left(1,\theta\right)$.
Each unemployed worker faces the same likelihood of finding a job because
firms lack a recruiting technology that selects a particular candidate and workers do not direct their search effort.
The matching technology embodies frictions that generate involuntary unemployment.

\subsection{Key Bellman Equations}
\label{sec:key-bellm-equat}

Key Bellman equations in the economy include
a firm's value of a filled job and a posted vacancy; and
a worker's value of employment and unemployment.

A firm's value of a filled job, $\mathcal{J}$, and a posted vacancy, $\mathcal{V}$, satisfy
\begin{align}
\label{eq:J}
\mathcal{J} &= y-w + \beta\left[s\mathcal{V} + \left(1-s\right)\mathcal{J}\right] \\
\label{eq:V}  
\mathcal{V} &= -c+\beta\left\{ q\left(\theta\right) \mathcal{J} + \left[1-q\left(\theta\right)\right] \mathcal{V}\right\}.
\end{align}
The asset value of a filled job equals flow profit, $y-w$, plus the expected discounted value of continuing the match.
The match ends with probability $s$, providing the firm an opportunity to post a vacancy; and
the match endures with probability $1-s$, providing the value of a filled job.
The asset value of a vacancy equals the flow posting cost, $c$, plus the expected discounted value of matching
with a productive worker.
A productive match occurs with probability $q \left( \theta \right)$ and the vacancy remains unfilled the following period
with probability $1 - q \left( \theta \right)$.

A worker's value of employment, $\mathcal{E}$, and unemployment, $\mathcal{U}$, satisfy
\begin{align}
\label{eq:E}
\mathcal{E} &= w+\beta\left[sU+\left(1-s\right)\mathcal{E}\right] \\
  \label{eq:U}
\mathcal{U} &= z+\beta\left\{ \theta q\left(\theta\right)\mathcal{E}+\left[1-\theta q\left(\theta\right)\right]\mathcal{U}\right\}.
\end{align}
The asset value of employment equals the flow wage plus the expected discounted value of
being
unemployed with probability $1 - \theta q \left( \theta \right)$ or
employed with probability $\theta q \left( \theta \right)$ the following period.

Convention in the economy dictates that a worker and a firm split the surplus generated from a match through Nash bargaining.
Surplus from a match, $\mathcal{S}$, is
the benefit to a firm   from operating        as opposed to maintaining a vacancy plus
the benefit to a worker from earning a wage   as opposed to experiencing nonwork: $\mathcal{S}=\left(\mathcal{J}-\mathcal{V}\right)+\left(\mathcal{E}-\mathcal{U}\right)$.
Nash bargaining depends on the parameter $\phi \in \left[0,1\right)$, which measures a worker's relative bargaining power. 
The outcome of Nash bargaining specifies that
what the worker stands to gain equals their share of surplus and the firm receives the remainder:
$\mathcal{E}-\mathcal{U}=\phi\mathcal{S}$ and $\mathcal{J} - \mathcal{V} =\left(1-\phi\right)\mathcal{S}$.

The next section defines an equilibrium and establishes conditions for existence and uniqueness. 

\subsection{Equilibrium} 
\label{sec:equilibrium}

The size of the labor force is normalized to $1$.
A steady-state equilibrium requires that
the number of workers who separate from jobs, $s \left( 1-u \right)$, equals
the number of unemployed workers who find employment, $\theta q \left( \theta \right) u$,
so that the unemployment rate remains constant.
The steady-state condition implies $u = s / \left[ s + \theta q \left( \theta \right) \right]$, which yields
a Beveridge-curve relationship that is negative in $u$--$v$ space.
``When there are more vacancies, unemployment is lower because the unemployed find jobs more easily'' \citep[][20]{pissarides_2000}.
Consistent with this theory,
data on vacancies and unemployment exhibit a negative relationship.
The data are shown in figure \ref{fig:beveridge} of appendix \ref{sec:app:data-unempl-job}.\footnote{\citet{barlevy_etal_2023}
  provide a discussion of the negative relationship and \citet{elsby_michaels_ratner_2015}
  provide an overview within the context of the DMP framework.}

A steady-state equilibrium is a list of values $\left\langle u, \theta, w \right\rangle$ that 
satisfy the Bellman equations \eqref{eq:J}~--~\eqref{eq:U} along with
the free-entry condition that requires $\mathcal{V} = 0$,
the stipulation that wages are determined by the outcome of Nash bargaining, and
the steady-state unemployment rate.
These equations can be manipulated to yield a single expression in $\theta$ alone:
\begin{equation}
\label{eq:eqm-theta}
y-z = \frac{r+s+\phi\theta q\left(\theta\right)}{\left(1-\phi\right)q\left(\theta\right)}c. 
\end{equation}
Details for arriving at the expression in \eqref{eq:eqm-theta} are provided in appendix \ref{sec:app:deriv-fund-surpl}.

While \citet{ljungqvist_sargent_2017} do not explicitly establish the existence of a unique $\theta$ that solves \eqref{eq:eqm-theta},
it is straightforward to do so.
I state and sketch a proof here 
because the steps yield a requirement for parameters that has an intuitive interpretation.
Appendix \ref{sec:app:existence-uniqueness} provides more detail.

\begin{prop}
\label{prop:unique-theta}
  Suppose $y>z$, which says that workers produce more of the homogeneous consumption good at work than at home, and
  suppose that $\left(1-\phi\right)\left(y - z\right)/\left(r+s\right)>c$.
  Then a unique $\theta \in \left(0, \left( 1-\phi \right) \left( y-c \right) / \left( \phi c \right) \right)$ solves \eqref{eq:eqm-theta}. 
  The condition that $\left(1-\phi\right)\left(y-z\right)/\left(r+s\right)>c$ requires that the value of an initial job opening be positive.
\end{prop}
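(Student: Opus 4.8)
The plan is to collapse \eqref{eq:eqm-theta} to a scalar equation $F(\theta)=y-z$ whose left-hand side is continuous and strictly increasing, so that it crosses the horizontal level $y-z$ at most once, and then to force a crossing by controlling the two endpoints. First I would divide the right-hand side by $q(\theta)$ to split it into a reciprocal piece and a linear piece,
\begin{equation*}
F(\theta)\coloneq\frac{(r+s)c}{(1-\phi)\,q(\theta)}+\frac{\phi c}{1-\phi}\,\theta ,
\end{equation*}
so that \eqref{eq:eqm-theta} is exactly $F(\theta)=y-z$. Since $q(\theta)=M(\theta^{-1},1)$ with $M$ increasing in its first argument and $\theta^{-1}$ decreasing, $q$ is decreasing and $1/q$ increasing, while the second term is linear and increasing. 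Hence $F$ is continuous and strictly increasing, which gives uniqueness immediately.

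For existence I would pin down the endpoints. The linear term alone sends $F(\theta)\to\infty$ as $\theta\to\infty$, so I need no hypothesis on the limiting fill rate at the top. At the bottom, the regularity assumption $\lim_{\theta\to0}q(\theta)=1$ gives $\lim_{\theta\to0^+}F(\theta)=(r+s)c/(1-\phi)$, and the hypothesis $(1-\phi)(y-z)/(r+s)>c$ says precisely that this limit lies strictly below $y-z$. Continuity, strict monotonicity, and the straddling of $y-z$ then yield a unique $\theta^\ast>0$ by the intermediate value theorem.

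To place $\theta^\ast$ inside the stated interval I would evaluate $F$ at the endpoint $\bar\theta\coloneq(1-\phi)(y-c)/(\phi c)$, where the linear term equals exactly $y-c$; because the reciprocal term is strictly positive, $F(\bar\theta)>y-c$, and monotonicity forces $\theta^\ast<\bar\theta$ as soon as $F(\bar\theta)>y-z$. For the economic reading in the last sentence, I would compute the value of posting into an empty market ($\theta\to0$, $q\to1$) from the same Bellman and bargaining relations that produce \eqref{eq:eqm-theta}, obtaining $\mathcal V=-c+(1-\phi)(y-z)/(r+s)$, so the hypothesis is literally $\mathcal V>0$.

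The hard part will not be the monotonicity argument, which is routine, but making the endpoint analysis airtight for a general, non-Cobb--Douglas $M$: I must extract $\lim_{\theta\to0}q(\theta)=1$ from the stated limiting/regularity conditions so that the lower-endpoint threshold $(r+s)c/(1-\phi)$ matches the hypothesis exactly, and I must confirm $F(\bar\theta)>y-z$ at the upper endpoint. The latter is immediate in the economically relevant case $z\ge c$ (where $y-c\ge y-z$); solving \eqref{eq:eqm-theta} for $q(\theta)$ and imposing $q>0$ furnishes the sharper companion bound $\theta^\ast<(1-\phi)(y-z)/(\phi c)$, which sits inside the stated interval under that same ordering.
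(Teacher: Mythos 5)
Your proof is correct and follows essentially the same route as the paper's: the paper works with $\mathcal{T}(\tilde\theta)=(y-z)/c-\bigl[r+s+\phi\tilde\theta q(\tilde\theta)\bigr]/\bigl[(1-\phi)q(\tilde\theta)\bigr]$, which is an affine rescaling of your $F$, signs the lower endpoint via $\lim_{\theta\to 0}q(\theta)=1$ and the hypothesis, applies the intermediate value theorem, gets uniqueness from monotonicity ($q'<0$), and performs the identical thought-experiment computation $\lim_{\theta\to 0}\mathcal{V}=-c+(1-\phi)(y-z)/(r+s)>0$ for the economic reading. The one point where you diverge is also where you improve on the statement: the paper's own proof locates the root in $\bigl(0,(1-\phi)(y-z)/(\phi c)\bigr)$, not in the interval $\bigl(0,(1-\phi)(y-c)/(\phi c)\bigr)$ printed in the proposition, so the $y-c$ there is evidently a typo for $y-z$; your unconditional companion bound $\theta^\ast<(1-\phi)(y-z)/(\phi c)$, obtained by dropping the strictly positive reciprocal term, together with your observation that the printed endpoint is only validated when $z\ge c$ (true in the paper's calibration, $z=0.6$ and $c=0.1$, but not in general), diagnoses this exactly.
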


\begin{proof}
To establish existence, 
I define the function
\begin{align*}
  \mathcal{T}\left(\tilde{\theta}\right) = \frac{y-z}{c}
                                           - \frac{r+s+\phi\tilde{\theta}q\left(\tilde{\theta}\right)}{\left(1-\phi\right)q\left(\tilde{\theta}\right)}.
\end{align*}
Using the fact that $\lim_{\tilde{\theta} \rightarrow 0}q\left(\tilde{\theta}\right)=1$ and the requirement that the value of posting an initial vacancy is positive,
$\lim_{\tilde{\theta}\rightarrow0}\mathcal{T} \left(\tilde{\theta}\right) = \left( y-z \right) / c  - \left( r+s \right) / \left( 1-\phi \right) > 0$.
The inequality, as will be shown, can be interpreted as an initial posted vacancy having positive value.
Next, I define $\tilde{\theta}^{\bullet}= \left( 1-\phi \right) \left( y-z \right) / \left( \phi c \right) > 0$,
where the inequality comes from the fact that $y>z$ and $\phi \in \left[0, 1\right)$ by assumption.
Then $\mathcal{T}\left(\tilde{\theta}^{\bullet}\right) < 0$.
Because $\mathcal{T}$ is a combination of continuous functions,
it is also continuous.
An application of the intermediate-value theorem establishes that
there exists $\theta \in \left(0, \left( 1-\phi \right) \left( y-z \right) / \left( \phi c \right) \right)$ such that $\mathcal{T}\left(\theta\right) = 0$. 
Uniqueness follows from the fact that $\mathcal{T}$ is everywhere decreasing.

The condition that $\left(1-\phi\right)\left(y-z\right)/\left(r+s\right)>c$
requires that the value of posting an initial vacancy is profitable.
The following thought experiment illustrates why. 

Starting from a given level of unemployment,
which is guaranteed with exogenous separations,
the value of posting an initial vacancy is computed as $\lim_{\theta \rightarrow 0}\mathcal{V}$.
In the thought experiment,
the probability that the initial vacancy is filled is $1$, as $\lim_{\theta \rightarrow 0}q\left(\theta\right)=1$.
The following period the firm earns the value of a productive match,
which equals the flow payoff $y-w$ plus the value of a productive match discounted by $\beta\left(1-s\right)$.
The value of $\mathcal{J}$ is thus $\left( y-w \right) / \left[ 1-\beta\left(1-s\right) \right]$.

The wage rate paid by the firm in this scenario is $\lim_{\theta\rightarrow0}w = \phi y+\left(1-\phi\right)z$,
making $\mathcal{J} = \left(1-\phi\right)\left(y-z\right) / \left[ 1-\beta\left(1-s\right) \right]$.
Using this expression in the value of an initial vacancy yields
\begin{align*}
\lim_{\theta\rightarrow0}\mathcal{V} = \lim_{\theta\rightarrow0}\left\langle -c+\beta\left\{ q\left(\theta\right)\mathcal{J}+\left[1-q\left(\theta\right)\right]\mathcal{V}\right\} \right\rangle =-c+\beta\frac{\left(1-\phi\right)\left(y-z\right)}{1-\beta\left(1-s\right)}  > 0.
\end{align*}
The inequality stipulates that in order to start the process of posting vacancies,
the first vacancy needs to be profitable.
Developing this inequality yields $\left(1-\phi\right)\left(y-z\right) / \left( r+s \right) > c$,
establishing the condition listed in proposition \ref{prop:unique-theta}.

To arrive at the equilibrium, other profit-seeking firms post vacancies.
Filling a vacancy is no longer guaranteed, which raises the expected cost of maintaining a vacancy until it is filled, lowering $\mathcal{V}$.
Recruitment efforts eventually drive $\mathcal{V}$ to $0$.
\end{proof}

Checking that parameters generate an equilibrium can be useful,
especially in complicated models.
Not only
does a steady state permit a comparative-static analysis,
but not checking can lead to key channels being mistakenly downplayed \citep[][47n18]{christiano_eichenbaum_trabandt_2021,ljungqvist_sargent_2021}.
In addition,
knowing the interval that contains equilibrium tightness
is useful for finding its numerical value. 

\subsection{The Elasticity of Labor-Market Tightness with Respect to Productivity}
\label{sec:decomposition-fundamental-surplus}

Unemployment dynamics are driven by labor-market tightness within the DMP class of models.
Big responses of unemployment to the driving force of productivity
require a high elasticity of tightness with respect to productivity,
$\eta_{\theta,y}$.
My main result establishes that
the two-factor, multiplicative decomposition of $\eta_{\theta,y}$
holds for a general matching technology, not only for the Cobb--Douglas case.
The
result is stated in proposition \ref{prop:upsilon}.
Appendix \ref{sec:decomp-elast-mark} provides details.

\begin{prop}
  \label{prop:upsilon}
  In the canonical DMP search model,
  which features
  a general matching technology,
  random search, linear utility, workers with identical capacities for work,
  exogenous separations, and no disturbances in aggregate productivity,
  the elasticity of market tightness with respect to productivity can be decomposed as
  \begin{equation}
\label{eq:eta-theta-y}
  \eta_{\theta,y}= \left[1 + \frac{\left( r+s \right) \left( 1-\eta_{M,u} \right)}{\left( r+s \right)\eta_{M,u} + \phi \theta q \left( \theta \right)}  \right] \frac{y}{y-z}
  \eqcolon \Upsilon \frac{y}{y-z} < \frac{1}{\eta_{M,u}} \frac{y}{y-z},
\end{equation}
where the second factor is the inverse of fundamental surplus fraction and
the first factor is bounded below by $1$ and above by $1 / \eta_{M,u}$: $1 < \Upsilon < 1 / \eta_{M,u}$.
\end{prop}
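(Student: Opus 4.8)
The plan is to treat equilibrium tightness as an implicit function of $y$ through \eqref{eq:eqm-theta} and to obtain $\eta_{\theta,y}$ by logarithmic differentiation. First I would take logs of \eqref{eq:eqm-theta}, writing $\ln(y-z) = \ln\!\left[r+s+\phi\theta q(\theta)\right] - \ln\!\left[(1-\phi)q(\theta)\right] + \ln c$, and then apply the operator $y\,(d/dy)$ to both sides, treating $\theta$ as a function of $y$. The left-hand side returns $y/(y-z)$ at once because $z$ and $c$ are constants, and the constant $(1-\phi)$ drops out of the right-hand side.

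The central step is to express the right-hand side in terms of the matching elasticity $\eta_{M,u}$. I would record two identities that follow from the constant-returns definition $q(\theta)=M(\theta^{-1},1)$: namely $\eta_{M,u}=-\theta q'(\theta)/q(\theta)$, so that $d\ln q/d\ln\theta=-\eta_{M,u}$, and $d[\theta q(\theta)]/d\theta = q(\theta)\left(1-\eta_{M,u}\right)$. Using $y\,(d\theta/dy)=\theta\,\eta_{\theta,y}$ and substituting, the differentiated right-hand side becomes $\eta_{\theta,y}\!\left[\phi\theta q(\theta)(1-\eta_{M,u})/\!\left(r+s+\phi\theta q(\theta)\right)+\eta_{M,u}\right]$. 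Placing the bracket over the common denominator $r+s+\phi\theta q(\theta)$, the cross terms $\pm\eta_{M,u}\,\phi\theta q(\theta)$ cancel and the numerator collapses to $\phi\theta q(\theta)+(r+s)\eta_{M,u}$. Solving for $\eta_{\theta,y}$ and checking the algebraic identity $1+(r+s)(1-\eta_{M,u})/\!\left[(r+s)\eta_{M,u}+\phi\theta q(\theta)\right]=\left[r+s+\phi\theta q(\theta)\right]/\!\left[(r+s)\eta_{M,u}+\phi\theta q(\theta)\right]$ then delivers the factorization $\eta_{\theta,y}=\Upsilon\, y/(y-z)$.

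For the bounds I would first establish $\eta_{M,u}\in(0,1)$: Euler's theorem for the constant-returns matching function gives $\eta_{M,u}+\eta_{M,v}=1$, and each elasticity is strictly positive because $M$ is strictly increasing in both arguments. The lower bound $\Upsilon>1$ then follows from the first displayed form of $\Upsilon$, whose added fraction has strictly positive numerator $(r+s)(1-\eta_{M,u})$ and strictly positive denominator $(r+s)\eta_{M,u}+\phi\theta q(\theta)$. For the upper bound, cross-multiplying $\Upsilon<1/\eta_{M,u}$ against the two positive denominators reduces the claim to $\eta_{M,u}\,\phi\theta q(\theta)<\phi\theta q(\theta)$, which holds because $\eta_{M,u}<1$; note this inequality is strict precisely when $\phi>0$, since at $\phi=0$ one gets $\Upsilon=1/\eta_{M,u}$ exactly.

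The main obstacle is the bookkeeping in the central step: correctly converting the derivatives of $q(\theta)$ and $\theta q(\theta)$ into the single elasticity $\eta_{M,u}$ and then simplifying the combined fraction so that the denominator $(r+s)\eta_{M,u}+\phi\theta q(\theta)$ emerges cleanly. Once $\eta_{M,u}\in(0,1)$ is in hand, the two bounds are routine.
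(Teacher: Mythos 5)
Your proof is correct, and its central computation coincides with the paper's: both differentiate the equilibrium condition \eqref{eq:eqm-theta} treating $\theta$ as an implicit function of $y$, convert $\theta q^{\prime}\left(\theta\right)/q\left(\theta\right)$ into $-\eta_{M,u}$, and use the add-and-subtract identity $1+\left(r+s\right)\left(1-\eta_{M,u}\right)/\left[\left(r+s\right)\eta_{M,u}+\phi\theta q\left(\theta\right)\right]=\left[r+s+\phi\theta q\left(\theta\right)\right]/\left[\left(r+s\right)\eta_{M,u}+\phi\theta q\left(\theta\right)\right]$ to reach $\eta_{\theta,y}=\Upsilon\,y/\left(y-z\right)$; your logarithmic differentiation is a repackaging of the paper's implicit-function-theorem calculation with $\digamma\left(\theta,y\right)$, and your identity $d\left[\theta q\left(\theta\right)\right]/d\theta=q\left(\theta\right)\left(1-\eta_{M,u}\right)$ checks out. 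Where you genuinely diverge is in the bounds. For $\eta_{M,u}\in\left(0,1\right)$ you invoke Euler's theorem, $\eta_{M,u}+\eta_{M,v}=1$ with both elasticities strictly positive, whereas the paper's proposition \ref{prop:eta-M-u} differentiates $\theta q\left(\theta\right)$ with respect to $v$ in two ways to land on $1-\eta_{M,u}=M_{v}/q\left(\theta\right)>0$; your route is shorter and makes the economics transparent (factor shares summing to one under constant returns), while the paper's stays self-contained at the level of $q\left(\theta\right)$. For the upper bound the paper follows Ljungqvist--Sargent's device of embedding $\Upsilon$ in a one-parameter family $\Upsilon\left(\chi\right)$ that scales $\phi\theta q\left(\theta\right)$, showing $\Upsilon\left(0\right)=1/\eta_{M,u}$ and $\partial\Upsilon/\partial\chi<0$; your direct cross-multiplication, which reduces the claim to $\eta_{M,u}\,\phi\theta q\left(\theta\right)<\phi\theta q\left(\theta\right)$, is more elementary and buys a sharper observation the paper elides: since the model admits $\phi\in\left[0,1\right)$, at $\phi=0$ the bound collapses to equality, $\Upsilon=1/\eta_{M,u}$, so the strict inequality as stated in proposition \ref{prop:upsilon} tacitly requires $\phi>0$ and $\theta q\left(\theta\right)>0$ at an interior equilibrium---exactly the degenerate case in which the paper's own monotonicity argument gives $\partial\Upsilon/\partial\chi=0$ rather than a strict decrease. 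The only cosmetic addition I would suggest is a sentence noting that both denominators in your cross-multiplication are positive because $\eta_{M,u}>0$, which licenses the manipulation.
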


The bound $1/\eta_{M,u}$ exceeds $1$ because
$\eta_{M,u} \in \left( 0,1 \right)$,
a well-known result that is proved
in appendix \ref{sec:app:elast-match-wrt-u} in proposition \ref{prop:eta-M-u} for completeness.

For the Cobb--Douglas case, $\eta_{M,u}$ is constant.
Estimates for its value and a
``consensus'' about reasonable values for the other terms in \eqref{eq:eta-theta-y}
imply that the factor $\Upsilon$ contributes little to the elasticity of market tightness
\citep[][2636]{ljungqvist_sargent_2017}.
Only the second factor, 
the inverse of the fundamental surplus fraction,
$y / \left( y-z \right)$,
can possibly generate unemployment dynamics observed in the data.
Because a diverse set of DMP models 
allow a similar two-factor decomposition,
the influence of the fundamental surplus
is a single, common channel for explaining unemployment volatility. 
Any additional feature added to a DMP model must run through this channel.

The decomposition, though, suggests that an economy's matching technology,
subsumed in $\Upsilon$,
does not matter for unemployment dynamics.
In general, however,
$\eta_{M,u}$ is not constant and depends on $\theta$, which varies meaningfully over the business cycle.

This variability is shown in figure \ref{fig:bound}, which depicts $1 / \eta_{M,u}$ for two prominent matching technologies.
While details will be provided in the computational experiment described below,
the main takeaway is that the bound warrants looking at whether matching technology can matter for unemployment volatility.
For the Cobb--Douglas parameterization, $1 / \eta_{M,u}$ equals the constant $1 / \alpha$.
In contrast,
the nonlinear series is depicted for values of $\theta$ observed in the US economy from December 2000 onwards.
The variability and magnitude of the series provide scope for investigating whether a matching technology matters for unemployment volatility.

\begin{figure}[h!]
\centerline{\includegraphics[width=0.8\textwidth]{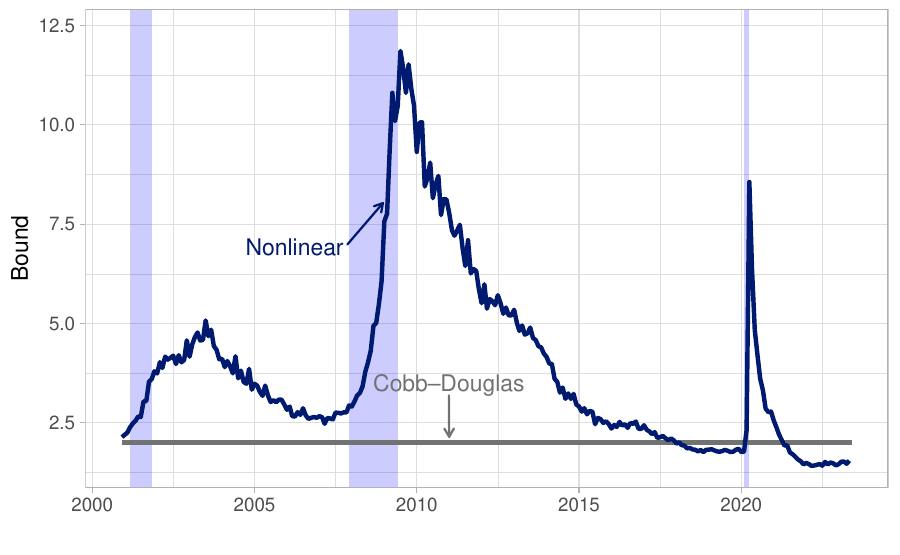}}
\caption[]{\label{fig:bound} Upper bounds for $\Upsilon$, inverses of the elasticity of matching with respect to unemployment, $1 / \eta_{M,u}$,
  for two matching technologies.}
\begin{figurenotes}[Notes]
  The inverses, $1 / \eta_{M,u}$, are upper bounds for $\Upsilon$, the first factor in the decomposition in \eqref{eq:eta-theta-y}.
  For the Cobb--Douglas technology, $1 / \eta_{M,u} = 1 / \alpha$, evaluated at $\alpha = 0.5$.
  For the nonlinear technology, $1 / \eta_{M,u} = \left( 1+\theta^{\gamma} \right) / \theta^{\gamma}$, evaluated at $\gamma = 1.27$ and
  values for $\theta$ observed in US data after December 2000.
  These parameter values are used in the computational experiment described in figure \ref{fig:du-dy}.
  Shaded areas indicate US recessions.
\end{figurenotes}
\begin{figurenotes}[Sources]
  Author's calculations using data from the US Bureau of Labor Statistics.
  Unemployment Level [UNEMPLOY], retrieved from FRED,
  Federal Reserve Bank of St.~Louis;
  \href{https://fred.stlouisfed.org/series/UNEMPLOY}{https://fred.stlouisfed.org/series/UNEMPLOY}.
  Job Openings: Total Nonfarm [JTSJOL], retrieved from FRED, Federal Reserve Bank of St.~Louis;
  \href{https://fred.stlouisfed.org/series/JTSJOL}{https://fred.stlouisfed.org/series/JTSJOL}.
\end{figurenotes}
\end{figure}

\section{Generating Larger Unemployment Responses to Productivity Perturbations}
\label{sec:gener-larg-unempl}

A comparative-equilibrium exercise is carried out by looking at how unemployment varies with productivity,
a main goal of DMP models, for two matching functions.
This shortcut for analyzing model dynamics is feasible because
unemployment is a fast-moving stock variable and productivity shocks exhibit high persistence.
The novelty is the comparison between matching functions.

I compare two parameterization of $M \left( u,v \right)$:
$A u^{\alpha} v^{1-\alpha}$ and $\mathcal{A} uv \left( u^{\gamma} + v^{\gamma} \right)^{-1/\gamma}$.
The first is the familiar  and empirically successful Cobb--Douglas parameterization
\citep{bleakley_fuhrer_1997,petrongolo_pissarides_2001}.
The second is a nonlinear parameterization suggested by \citet{den-haan_ramey_watson_2000}.
To motivate the nonlinear form,
imagine that each unemployed person contacts other agents randomly.
The probability that the other agent is a firm is $v / \left( u+v \right)$.
There are $u \times v / \left( u+v \right)$ matches.
The general form used here captures thick and thin market externalities.

The computational experiment begins by replicating \citet[][2644, fig.~2]{ljungqvist_sargent_2017}.
I use their parameters for comparison:
The model period is one day, avoiding job-finding and \mbox{-filling} probabilities above $1$.
The discount factor is $\beta = 0.95^{1/365}$, which corresponds to an annual interest rate of $5$ percent.
The daily separation rate is $s = 0.001$, which corresponds to a job lasting on average 2.8 years.
The bargaining parameter is $\phi = 0.5$, which is the midpoint of its range.
The flow cost of posting a vacancy is $c = 0.1$.
The value of nonwork is $z = 0.6$ and
values of workers' productivity are investigated above $z$ and substantially less than unity.\footnote{Appendix \ref{sec:joint-param-c-A}
  shows how a different choice of $c$ will produce the same equilibrium level of unemployment and job-finding (but not job-filling)
  through a different level of matching efficiency.
  \citet{kiarsi_2020} emphasizes the importance of the cost of posting a vacancy in this class of models.}

The remaining parameters specify the matching technology.
For the Cobb--Douglas matching technology 
$\alpha = 0.5$, so that $\eta_{M,u}$ equals the bargaining parameter.
This choice satisfies \citeauthor{hosios_1990}'s \citeyearpar{hosios_1990} efficiency condition.
So far,
these parameters agree with those adopted by \citet{ljungqvist_sargent_2017}.
For the nonlinear matching technology,
I set $\gamma = 1.27$ to agree with \citet[][491, table 1]{den-haan_ramey_watson_2000}.
The only remaining parameters to choose are the matching-efficiency parameters, $A$ and $\mathcal{A}$. 

The matching-efficiency parameters along
with productivity levels $y \in Y = \left\{ 0.61, 0.63, 0.65 \right\}$
index six economies. 
For each $y \in Y$,
$A$ and $\mathcal{A}$ are calibrated to make the unemployment rate 5 percent.
Some values of matching efficiency can cause finding and filling probabilities to rise above $1$, as established in appendix \ref{sec:app:two-matching-technologies},
but this is avoided by adopting a daily time period for the calibration \citep[][2639n6]{ljungqvist_sargent_2017}.
When I perturb productivity around $y$ for each economy, all other parameters remain fixed.

How the steady-state unemployment rate responds to productivity perturbations
is shown in  figure \ref{fig:du-dy}.
Unemployment
increases when productivity falls and
decreases when productivity rises,
regardless of 
matching technology.
Magnitudes of unemployment responses, however, depend on at least two features.

\begin{figure}[htbp]
\centerline{\includegraphics[width=\textwidth]{./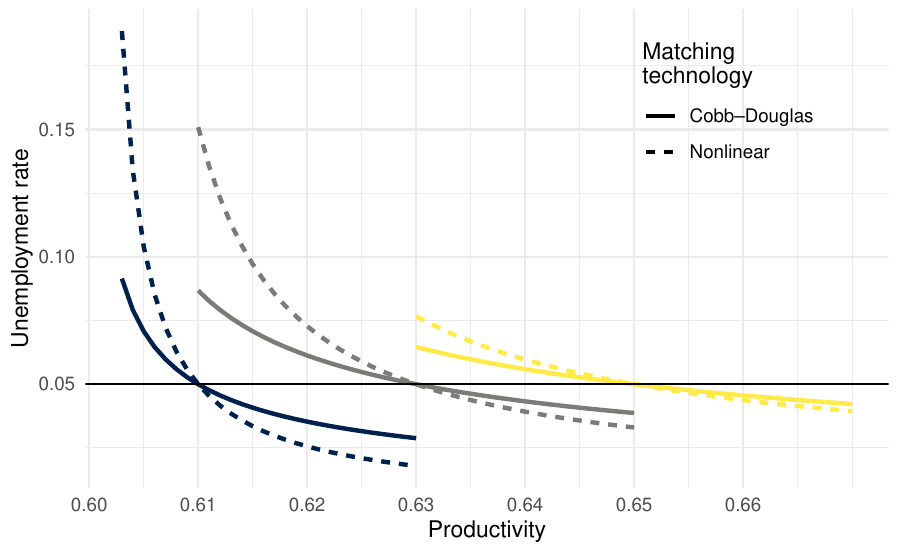}}
\caption[]{\label{fig:du-dy} Responses of unemployment to productivity perturbation.}
\begin{figurenotes}[Notes]
  The six economies are indexed by
  three productivity levels and
  two matching technologies.
  For each economy,
  matching efficiency for each matching technology is adjusted to generate
  5 percent unemployment at   
  productivity levels $0.61$, $0.63$, and $0.65$.
  The matching technologies are
  Cobb--Douglas, $M \left( u,v \right) = Au^{\alpha}v^{1-\alpha}$,
  and nonlinear, $\mathcal{M} \left( u,v \right) = \mathcal{A} uv \left( u^{\gamma} + v^{\gamma} \right)^{-1/\gamma}$.
  In each economy,
  steady-state unemployment rates are shown for
  perturbations in productivity around each economy's baseline productivity level.
\end{figurenotes}
\end{figure}

First,
looking from left to right,
the closer $y$ is to $z$, the smaller is the fundamental surplus.
The smaller the fundamental surplus, as predicted by equation \eqref{eq:eta-theta-y},
the more $\theta$ and thus unemployment respond to productivity changes.
For the two dark, navy curves at left, where line pattern indexes matching technology,
the fundamental surplus fraction is smallest and unemployment responses are largest.
For the two light, yellow curves at right, where line pattern again indexes matching technology,
the fundamental surplus fraction is largest and unemployment responses are smallest.

Second,
the relationship between unemployment and productivity depends on an economy's matching technology.
This point can be seen by comparing solid lines to broken lines.
For each $y \in Y$,
the nonlinear matching technology, 
causes unemployment to respond more to changes in productivity.
This result will be useful to those who
want address the Shimer or unemployment-volatility puzzle \citep{shimer_2005,pissarides_2009}.

\section{Conclusion}
\label{sol:conclusion}

For a canonical DMP model,
I showed that
the elasticity of labor-market tightness with respect to productivity can be decomposed into two multiplicative factors for a general matching technology.
One of the factors depends on the fundamental surplus and this factor has the largest influence on unemployment dynamics in the computational experiment.
The other factor is bounded above by the inverse of the elasticity of matching with respect to unemployment,
which, in general, varies meaningfully over the business cycle.
The finding leads to the conclusion that matching technology can matter for unemployment dynamics.
As features like sticky prices, sticky wages, idiosyncratic shocks, and composition effects of employment are added to the DMP class of models,
investigating interactions with the matching technology may be worth considering, not just the fundamental-surplus channel.

\bibliography{../../../bibliography/bibliography-org-ref}
\bibliographystyle{econ}

\clearpage
\pagebreak

\appendix
\appendixpage

\titleformat{\section}%
[block]%
{\normalsize\bfseries}%
{Appendix \thesection.}%
{1em}%
{\centering}%
[]%

\titlespacing*{\section}{0pt}{1em}{1em}

\titleformat{\subsection}%
[block]%
{\normalsize\bfseries}%
{Appendix \thesubsection}%
{1em}%
{\centering}%
[]%

\section{Data}

This section shares data.
Section \ref{sec:app:data-unempl-job} provides evidence for a few assertions made in the text,
including the simultaneous occurrence of millions of job opening and unemployed people.
Section \ref{sec:app:data-from-calibrated} shares data generated from the computational experiment.

Replication materials are available at
\begin{center}
  \href{https://github.com/richryan/fundamentalSurplusGeneralMatch}{https://github.com/richryan/fundamentalSurplusGeneralMatch}.
\end{center}

\subsection{Data on Unemployment and Job Openings}
\label{sec:app:data-unempl-job}

Each month
there are millions of unemployed people despite millions of job openings.
These two series are shown in figure \ref{fig:vac-unemp}.
The number of unemployed people
is a statistic computed from responses
to the Current Population Survey.
The series is depicted with the broken line.
The number of job openings
is a statistic computed from responses
to the Job Openings and Labor Turnover Survey.
The monthly series in figure \ref{fig:vac-unemp} start in December 2000,
when data from the Job Openings and Labor Turnover Survey become available.
The horizontal blue line indicates one million.

\begin{figure}[htbp]
\centerline{\includegraphics[width=0.8\textwidth]{./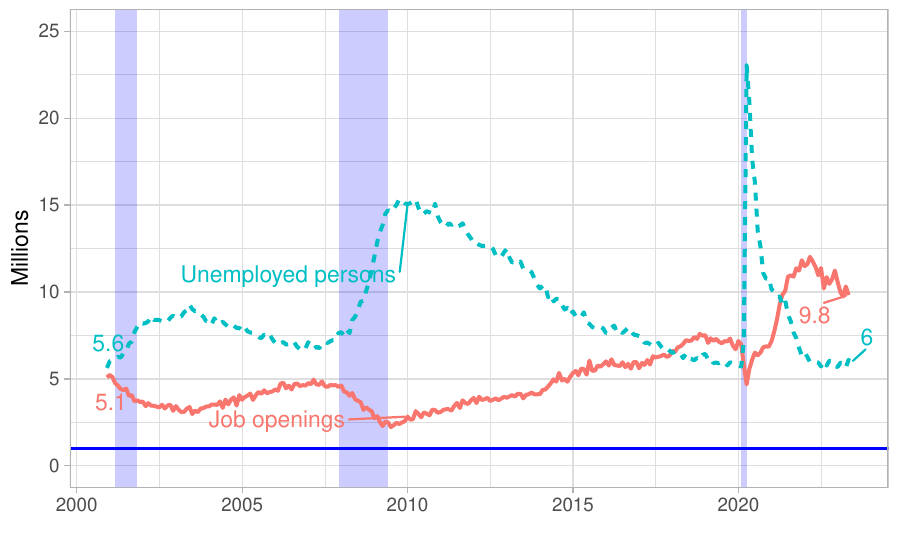}}
\caption[]{\label{fig:vac-unemp} Jobs openings and unemployed persons.}
\begin{figurenotes}[Notes]
  The blue horizontal line shows the level $1$ million.
  Shaded areas indicate US recessions.
\end{figurenotes}
\begin{figurenotes}[Sources]
  US Bureau of Labor Statistics.
  Unemployment Level [UNEMPLOY], retrieved from FRED,
  Federal Reserve Bank of St.~Louis;
  \href{https://fred.stlouisfed.org/series/UNEMPLOY}{https://fred.stlouisfed.org/series/UNEMPLOY}.
  Job Openings: Total Nonfarm [JTSJOL], retrieved from FRED, Federal Reserve Bank of St.~Louis;
  \href{https://fred.stlouisfed.org/series/JTSJOL}{https://fred.stlouisfed.org/series/JTSJOL}.
\end{figurenotes}
\end{figure}

When job openings or vacancies are plotted against unemployment
the relationship is known as the Beveridge curve.
Figure \ref{fig:beveridge} shows this relationship.
The relationship is negative because
more vacancies create more matches,
which reduce unemployment.
\citet{barlevy_etal_2023}
use a bathtub metaphor to describe the relationship
between vacancies and unemployment.
They also analyze longer time series,
which is informative.

\begin{figure}[htbp]
\centerline{\includegraphics[width=0.8\textwidth]{./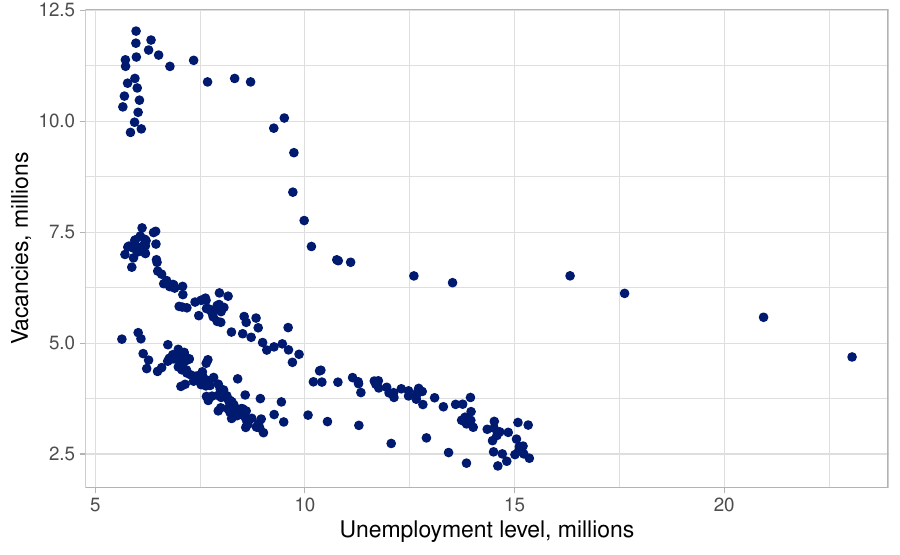}}
\caption[]{\label{fig:beveridge} Empirical relationship between vacancies and unemployment.}
\begin{figurenotes}[Note]
  The relationship is often referred to as the Beveridge curve.
  Vacancies refer to job openings.
\end{figurenotes}
\begin{figurenotes}[Sources]
  US Bureau of Labor Statistics.
  Unemployment Level [UNEMPLOY], retrieved from FRED,
  Federal Reserve Bank of St.~Louis;
  \href{https://fred.stlouisfed.org/series/UNEMPLOY}{https://fred.stlouisfed.org/series/UNEMPLOY}.
  Job Openings: Total Nonfarm [JTSJOL], retrieved from FRED, Federal Reserve Bank of St.~Louis;
  \href{https://fred.stlouisfed.org/series/JTSJOL}{https://fred.stlouisfed.org/series/JTSJOL}.
\end{figurenotes}
\end{figure}

The relationship in figure \ref{fig:beveridge} can be expressed in rates,
where
the unemployment rate is shown on the horizontal axis and
the vacancy rate is shown on the vertical axis.
The vacancy rate could be constructed by
dividing the number of vacancies by the sum of vacancies plus the aggregate measure of productive firms or employment.

\subsection{Data from Calibrated Models}
\label{sec:app:data-from-calibrated}

Figure \ref{fig:elasticities} provides data on the elasticities of market tightness and the wage rate for the six economies studied in figure \ref{fig:du-dy}.
In panel A, the elasticities of market tightness, computed using the expression in \eqref{eq:eta-theta-y},
show that the nonlinear technology delivers higher labor-market volatility.
This idea is expressed in figure \ref{fig:du-dy} in terms of unemployment rates.

Panel B of figure \ref{fig:elasticities} shares elasticities of the wage rate with respect to $y$ for the six economies studied in figure \ref{fig:du-dy}.
The values are computed using equation \eqref{eq:app:dw-dy} in this appendix.
The elasticities add an important point to any interpretation of higher labor-market volatility:
The economy indexed by $y = 0.61$ does not exhibit higher $\eta_{\theta,y}$ because wages respond less to productivity.
Under that false narrative,
the reason $\eta_{\theta,y}$ is higher would be because firms stand more to gain from an increase in productivity.
Because wages are less elastic and respond less to productivity, an increase in productivity would mean more profit for a firm owner.
The surplus generated from an increase in $y$ goes either to the worker or to the firm owner---and it does not go to the worker when wages are inelastic.
But figure \ref{fig:elasticities} rules this narrative out:
Panel B shows that
(1) wages are more elastic under the nonlinear technology than the Cobb--Douglas technology and
(2) wages are elastic across all productivity levels and matching technologies.
The data suggest that matching technology does matter.

Figures \ref{fig:job-finding} and \ref{fig:job-filling} show monthly job-finding and job-filling probabilities generated by the six economies.
The daily rates are converted to monthly rates using the computations discussed in appendix \ref{sec:app:converting-daily-to-monthly}.
The main takeaway is that the daily calibration forces monthly job-finding and filling rates to stay within $0$ and $1$; although,
the monthly job-filling rates are close to $1$.
\citeauthor{ljungqvist_sargent_2017}'s \citeyearpar{ljungqvist_sargent_2017} skillful suggestion is helpful, because
it is almost guaranteed that a high job-finding rate would push the job-filling rate above $1$ in a monthly or quarterly calibration.

\begin{figure}[htbp]
\centerline{\includegraphics[width=0.6\textwidth]{./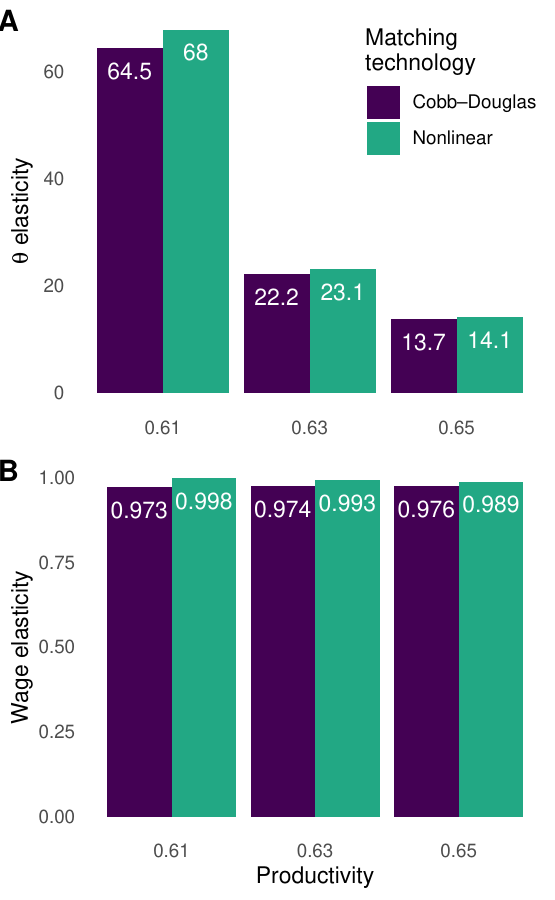}}
\caption[]{\label{fig:elasticities} Elasticities of market tightness and the wage rate.}
\begin{figurenotes}[Notes]
  The six values in each panel correspond to
  the six economies studied in figure \ref{fig:du-dy}.
  Panel A shows the elasticities of market tightness, $\eta_{\theta,y}$, computed using equation \eqref{eq:eta-theta-y}.
  Panel B shows the elasticities of the wage rate, $\eta_{w,y}$, computed using equation \eqref{eq:app:dw-dy}.
\end{figurenotes}
\end{figure}

\begin{figure}[htbp]
\centerline{\includegraphics[width=\textwidth]{./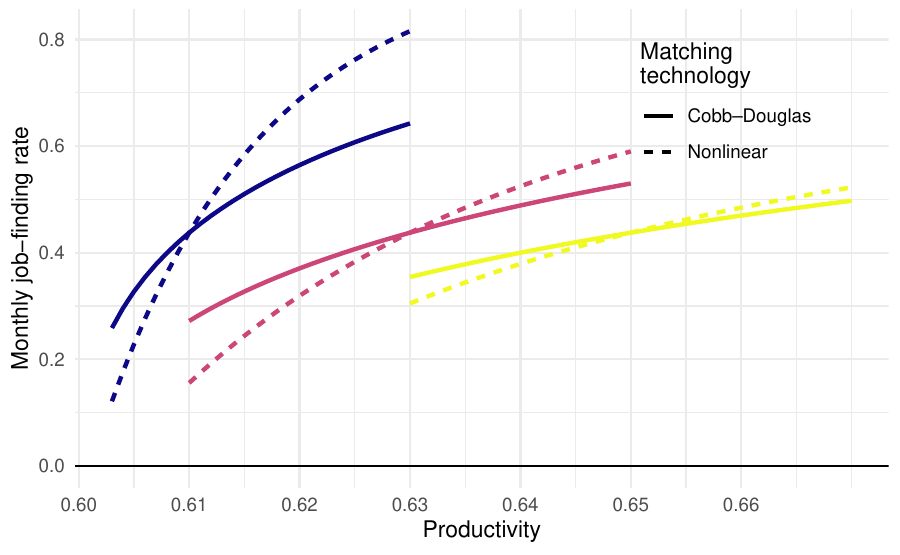}}
\caption[]{\label{fig:job-finding} Equilibrium rates of job finding.}
  \begin{figurenotes}[Notes]
    Job-finding rates for six economies with match efficiency adjusted to generate 5 percent unemployment for each matching technology at
    productivity levels $.61$, $0.63$, and $0.65$.
    The data correspond to the six economies studied in figure \ref{fig:du-dy}.
    Daily probabilities are converted to monthly probabilities using the computations described in appendix \ref{sec:app:converting-daily-to-monthly}.
  \end{figurenotes}
\end{figure}

\begin{figure}[htbp]
  \centerline{\includegraphics[width=\textwidth]{./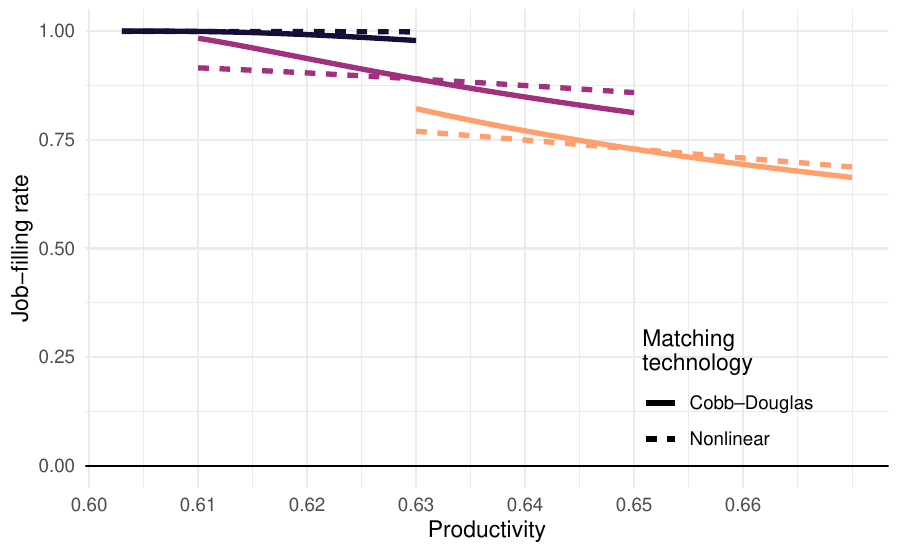}}
\caption[]{\label{fig:job-filling} Equilibrium rates of job filling.}
\begin{figurenotes}[Notes]
    Job-filling rates for six economies with match efficiency adjusted to generate 5 percent unemployment for each matching technology at
    productivity levels $.61$, $0.63$, and $0.65$.
    The data correspond to the six economies studied in figure \ref{fig:du-dy}.
    Daily probabilities are converted to monthly probabilities using the computations described in appendix \ref{sec:app:converting-daily-to-monthly}.    
  \end{figurenotes}
\end{figure}

\clearpage
\pagebreak

\section{The Elasticity of Matching with Respect to Unemployment}
\label{sec:app:elast-match-wrt-u}

In general, a matching technology computes the number of new matches
or new hires produced when $u$ workers are searching for jobs and
$v$ vacancies are posted.
A matching technology, $M$, in other words,
maps unemployment and vacancies into matches:
$M: \mathbb{R}_{+} \times \mathbb{R}_{+} \rightarrow \mathbb{R}_{+}$.
It is increasing in
both its non-negative arguments and exhibits constant returns to scale in $u$ and $v$.

Before turning to the particular parameterization,
for completeness,
I state and prove a well known result about the elasticity of matching with respect
to unemployment.
This result will be used in the discussion about
the decomposition of the elasticity of tightness.

I use the following notation: 
\begin{itemize}
\item $M$ denotes the number of new matches generated within a period.
\item $u$ denotes the number of unemployed workers searching for a job. 
\item $v$ denotes the number of vacancies posted by firms recruiting workers.
\item $\theta=v/u$, the ratio of vacancies to unemployment, denotes labor-market
tightness. 
\item $q\left(\theta\right)=M/v$ denotes the probability that a vacancy
is filled. 
\item $\theta q\left(\theta\right)=M/u$ denotes the probability that a
worker finds a job. 
\end{itemize}
That $M/v$ denotes the probability that a posted vacancy is filled
follows from the assumption that search is random, meaning each vacancy
faces the same likelihood of being filled.

In addition, a general matching technology should possess the following characteristics:
\begin{equation}
\lim_{\theta\rightarrow0}q\left(\theta\right)=1\text{ and }\lim_{\theta\rightarrow\infty}q\left(\theta\right)=0\label{eq:app:limit-fill}
\end{equation}
while 
\begin{equation}
\lim_{\theta\rightarrow0}\theta q\left(\theta\right)=0\text{ and }\lim_{\theta\rightarrow\infty}\theta q\left(\theta\right)=1,\label{eq:app:limit-find}
\end{equation}
which says that the job-filling probability goes to $1$ as the ratio of job openings to unemployed persons goes to $0$,
or $v/u \rightarrow 0$.
Likewise, it is nearly impossible to fill a vacancy when there are many job openings relative to the number of unemployed.
Job-finding is the flip side of this process,
which explains the limits in \eqref{eq:app:limit-find}.

The elasticity of matching with respect to unemployment is
the percent change in matches given a percent change in unemployment:
\begin{equation}
\eta_{M,u} \coloneq \frac{dM}{du}\frac{u}{M}= -\frac{\theta q^{\prime}\left(\theta\right)}{q\left(\theta\right)}.\label{eq:app:eta-M-u}
\end{equation}
The expression in \eqref{eq:app:eta-M-u} comes from direct computation.
Indeed, from the definition of job filling, $q\left(\theta\right)=M/v$, it follows that
\begin{align*}
\frac{dM}{du} &= \frac{d}{du}\left[q\left(\theta\right)v\right]\\
 & =q^{\prime}\left(\theta\right)\frac{-v}{u^{2}}v=-q^{\prime}\left(\theta\right)\theta^{2},
\end{align*}
where the first equality uses the fact that $q\left(\theta\right)=M/v$
and the second line uses the fact that $d\theta/du=-v/u^{2}$.
Thus
\begin{align*}
\frac{dM}{du}\frac{u}{M} & =-\frac{q^{\prime}\left(\theta\right)\theta^{2}}{M/u}=-\frac{q^{\prime}\left(\theta\right)\theta^{2}}{\theta q\left(\theta\right)}\\
 & =-\frac{\theta q^{\prime}\left(\theta\right)}{q\left(\theta\right)}>0,
\end{align*}
where the last equality in the first line uses the definition of job finding: $\theta q\left(\theta\right)=M/u$.
The inequality uses the property that $q^{\prime}<0$.
The inequality $\eta_{M,u}>0$ means that, for a given level of labor demand,
an increase in workers searching for jobs increases the number of new hires.

Moreover $\eta_{M,u}$ lies in the interval $\left(0,1\right)$.
It has already been established that $\eta_{M,u}>0$.
The fact that $\eta_{M,u}<1$ can be established by differentiation of $\theta q\left(\theta\right)$ with respect to $v$: 
\begin{align*}
\frac{d}{dv}\left[\theta q\left(\theta\right)\right] & =\left\{ \left[1\times q\left(\theta\right)\right]+\theta q^{\prime}\left(\theta\right)\right\} \frac{1}{u}\\
 & =\left[q\left(\theta\right)+\theta q^{\prime}\left(\theta\right)\right]\frac{1}{u}.
\end{align*}
Because $\theta q\left(\theta\right)$ can be written $\theta M\left(u,v\right)/v$, it also true that
\begin{align*}
\frac{d}{dv}\left[\theta q\left(\theta\right)\right] & =\frac{1}{u}\frac{M}{v}+\theta\frac{M_{v}v-M}{v^{2}}\\
 & =\frac{1}{u}\frac{M}{v}+\theta\left(\frac{M_{v}}{v}-\frac{q\left(\theta\right)}{v}\right)\\
 & =\frac{1}{v}\left\{ \theta q\left(\theta\right)+\theta\left[M_{v}-q\left(\theta\right)\right]\right\} 
\end{align*}
where $M_{v}$ is the derivative of the matching function with respect to vacancies.
Combining these two expressions yields 
\begin{align*}
\left[q\left(\theta\right)+\theta q^{\prime}\left(\theta\right)\right]\frac{1}{u} & =\frac{1}{v}\left\{ \theta q\left(\theta\right)+\theta\left[M_{v}-q\left(\theta\right)\right]\right\} \\
\therefore\left[q\left(\theta\right)+\theta q^{\prime}\left(\theta\right)\right]\theta & =\theta q\left(\theta\right)+\theta\left[M_{v}-q\left(\theta\right)\right]\\
\therefore\theta q^{\prime}\left(\theta\right) & =M_{v}-q\left(\theta\right)\\
\therefore\frac{\theta q^{\prime}\left(\theta\right)}{q\left(\theta\right)} & =\frac{M_{v}}{q\left(\theta\right)}-1\\
\therefore1-\left(-\frac{\theta q^{\prime}\left(\theta\right)}{q\left(\theta\right)}\right) & =\frac{M_{v}}{q\left(\theta\right)}\\
\therefore1-\eta_{M,u} & =\frac{M_{v}}{q\left(\theta\right)}.
\end{align*}
Therefore, $1-\eta_{M,u}$ is positive since $M$ is increasing in both its arguments.
Re-arranging $1-\eta_{M,u} > 0$ establishes that $\eta_{M,u}<1$.
Hence, $\eta_{M,u}\in\left(0,1\right)$.
These results are collected in proposition \ref{prop:eta-M-u}.

\begin{prop}
  \label{prop:eta-M-u}
  Given a constant-returns to scale matching technology
  that is increasing in both its arguments, $M\left(u,v\right)$,
  the elasticity of matching with respect to unemployment,
  $\eta_{M,u} \coloneq -\theta q^{\prime}\left(\theta\right)/q\left(\theta\right)$,
  lies in the interval $\left(0,1\right)$.
\end{prop}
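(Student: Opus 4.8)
The plan is to use constant returns to scale to reduce the two-variable matching function to the single-variable curve $q$, and then to attach each of the two bounds to one of the hypotheses on $M$. Homogeneity of degree one gives $q(\theta) = M(u,v)/v = M(1/\theta, 1)$, so both inequalities can be read directly off $q$ and its derivative. I assume throughout the differentiability already implicit in the definition $\eta_{M,u} = -\theta q'(\theta)/q(\theta)$.

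For the lower bound I would first fix the sign of $q'$. Differentiating $q(\theta) = M(1/\theta, 1)$ by the chain rule gives $q'(\theta) = -M_u(1/\theta, 1)/\theta^2$, where $M_u$ denotes the partial derivative of $M$ with respect to its first argument. Since $M$ is increasing in unemployment, $M_u > 0$, so $q'(\theta) < 0$. With $\theta > 0$ and $q(\theta) > 0$, the defining ratio $\eta_{M,u} = -\theta q'(\theta)/q(\theta)$ is then a product of positive quantities, so $\eta_{M,u} > 0$.

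For the upper bound, which is the more delicate half, I would recover the marginal product of vacancies from the identity $M(u,v) = v\,q(\theta)$. Differentiating with respect to $v$, and using $\partial\theta/\partial v = 1/u$ so that $v \cdot \partial\theta/\partial v = \theta$, yields $M_v = q(\theta) + \theta q'(\theta)$. Dividing by $q(\theta)$ gives
\[
\frac{M_v}{q(\theta)} = 1 + \frac{\theta q'(\theta)}{q(\theta)} = 1 - \eta_{M,u}.
\]
Because $M$ is increasing in vacancies, $M_v > 0$, and $q(\theta) > 0$, the left-hand side is strictly positive, which forces $\eta_{M,u} < 1$. Combined with the lower bound, this yields $\eta_{M,u} \in (0,1)$.

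The step I expect to be the main obstacle is the upper bound. The lower bound follows almost immediately from monotonicity in a single argument, but $\eta_{M,u} < 1$ does not; it requires tying the elasticity to the marginal product $M_v$. The nonobvious move is to write $M$ as $v\,q(\theta)$ and differentiate in $v$, so that the product rule converts the elasticity into $1 - M_v/q(\theta)$; only then does the sign restriction $M_v > 0$ close the argument. An equally clean alternative would be to invoke Euler's theorem for the degree-one homogeneous $M$, writing $M = u M_u + v M_v$ and dividing by $v$ to obtain $q(\theta) = \theta^{-1} M_u + M_v$; since $\eta_{M,u} = \theta^{-1}M_u/q(\theta) = 1 - M_v/q(\theta)$, both bounds drop out at once from $M_u, M_v > 0$.
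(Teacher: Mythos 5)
Your proof is correct and takes essentially the same route as the paper's: positivity of $\eta_{M,u}$ from $M_u>0$ (equivalently $q'<0$), and the upper bound from the key identity $1-\eta_{M,u}=M_v/q\left(\theta\right)>0$, which the paper derives by differentiating $\theta q\left(\theta\right)=M/u$ with respect to $v$ in two ways, while you reach it a bit more directly by applying the product rule to $M=v\,q\left(\theta\right)$. Your closing Euler's-theorem remark is just a compact repackaging of that same identity, so no genuinely different argument is involved.
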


\clearpage
\pagebreak

\section{Properties of Two Prominent Matching Technologies}
\label{sec:app:two-matching-technologies}

\textbf{Cobb--Douglas}.
One prominent parameterization of $M$ is
\begin{equation}
\mathsf{M}\left(u,v\right)=\mathsf{A}u^{\alpha}v^{1-\alpha},\quad \mathsf{A}>0,\quad\alpha\in\left(0,1\right).\label{eq:app:M-cobb-douglas}
\end{equation}
The function $\mathsf{M}$ exhibits constant returns to scale in $u$
and $v$. This is the familiar Cobb--Douglas parameterization. Because
$\mathsf{M}$ is increasing in both unemployment and vacancies, $\alpha>0$
and $1-\alpha>0$. These two inequalities imply $0<\alpha<1$.
The Cobb--Douglas parameterization delivers good empirical performance based on the statistical evidence provided by \citet{petrongolo_pissarides_2001}.

Under random search, the probability that a vacancy is filled is $\mathsf{M}/v$:
\begin{align*}
q_{\mathsf{A}}\left(\theta\right) & =\frac{\mathsf{M}}{v}=\mathsf{A} u^{\alpha}v^{-\alpha}=\mathsf{A}\theta^{-\alpha},
\end{align*}
where the notation $q_{\mathsf{A}}$ explicitly references the matching efficiency parameter, $\mathsf{A}$.
A direct computation establishes that
the job-filling probability is decreasing in tightness.
It is harder, in other words, for a firm to fill a vacancy the more vacancies there are for a given level of unemployment.

Under random search, the probability that a worker finds a job is
\begin{align*}
\theta q_{\mathsf{A}}\left(\theta\right)=\frac{\mathsf{M}}{u}\frac{v}{v}=\mathsf{A}\theta^{1-\alpha}.
\end{align*}
The job-finding probability is increasing in $\theta$.
It is easier, in other words, for an individual worker to find a job the more vacancies there are for a given level of unemployment.

The elasticity of matching with respect to unemployment is constant:
\begin{align}
\label{eq:app:eta-M-u-Cobb-Douglas}
  \begin{split}
\eta_{M,u} & =-\frac{\theta q^{\prime}_{\mathsf{A}}\left(\theta\right)}{q_{\mathsf{A}}\left(\theta\right)}\\
 & =-\frac{\theta\left(-\alpha\right)\mathsf{A}\theta^{-\alpha-1}}{\mathsf{A}\theta^{-\alpha}}\\
 & =\alpha.    
  \end{split}
\end{align}

\textbf{Nonlinear}.
Another parameterization of the matching technology, suggested by \citet{den-haan_ramey_watson_2000}, is
\begin{equation}
\label{eq:app:M-nonlinear}
\mathcal{M}\left(u,v\right)=\mathcal{A}\frac{uv}{\left[u^{\gamma}+v^{\gamma}\right]^{1/\gamma}},\quad\mathcal{A},\gamma>0.
\end{equation}
The function $\mathcal{M}$ exhibits constant returns to scale:
For any $\lambda\in\mathbb{R}_{+}$,
\begin{align*}
\mathcal{A}\frac{\left(\lambda u\right)\left(\lambda v\right)}{\left[\left(\lambda u\right)^{\gamma}+\left(\lambda v\right)^{\gamma}\right]^{1/\gamma}} & =\mathcal{A}\frac{\lambda^{2}uv}{\left\{ \lambda^{\gamma}\left[u^{\gamma}+v^{\gamma}\right]\right\} ^{1/\gamma}}\\
 & =\mathcal{A}\frac{\lambda^{2}uv}{\lambda\left[u^{\gamma}+v^{\gamma}\right]^{1/\gamma}}\\
 & =\lambda\mathcal{A}\frac{uv}{\left[u^{\gamma}+v^{\gamma}\right]^{1/\gamma}}.
\end{align*}
In addition, $\mathcal{M}$ is increasing in both its arguments. Indeed,
\begin{align*}
\frac{\partial\mathcal{M}}{\partial u} & =\mathcal{A}\frac{v\left(u^{\gamma}+v^{\gamma}\right)^{1/\gamma}-\frac{1}{\gamma}\left(u^{\gamma}+v^{\gamma}\right)^{1/\gamma-1}\gamma u^{\gamma-1}uv}{\left(u^{\gamma}+v^{\gamma}\right)^{\frac{2}{\gamma}}}\\
 & =\mathcal{A}\frac{v\left[u^{\gamma}+v^{\gamma}\right]^{1/\gamma}-\left(u^{\gamma}+v^{\gamma}\right)^{1/\gamma-1}u^{\gamma}v}{\left(u^{\gamma}+v^{\gamma}\right)^{\frac{2}{\gamma}}}\\
 & =\mathcal{A}\frac{u\left[u^{\gamma}+v^{\gamma}\right]^{1/\gamma}}{\left(u^{\gamma}+v^{\gamma}\right)^{\frac{2}{\gamma}}}\left(1-\frac{u^{\gamma}}{u^{\gamma}+v^{\gamma}}\right)\\
 & >0.
\end{align*}
A symmetric argument establishes that $\mathcal{M}$ is increasing in $v$. 

Under the nonlinear parameterization, the probability that a vacancy is filled is
\begin{align*}
  q_{\mathcal{A}}\left(\theta\right)
  &= \frac{\mathcal{M}}{v}=\mathcal{A}\frac{u}{\left[u^{\gamma}+v^{\gamma}\right]^{1/\gamma}}\frac{1/u}{1/u} \\
  & =\mathcal{A}\frac{1}{\left[1+\left(v/u\right)^{\gamma}\right]^{1/\gamma}}\\
 & =\mathcal{A}\frac{1}{\left(1+\theta^{\gamma}\right)^{1/\gamma}}.
\end{align*}
A direct computation establishes that the job-filling probability is decreasing in tightness:
\begin{align*}
\frac{dq_{\mathcal{A}}\left(\theta\right)}{d\theta}=-\frac{\mathcal{A}}{\gamma}\frac{1}{\left(1+\theta^{\gamma}\right)^{1/\gamma-1}}\gamma\theta^{\gamma-1}<0.
\end{align*}
The probability a worker finds a job is
\begin{align*}
\theta q_{\mathcal{A}}\left(\theta\right) & =\frac{\mathcal{M}}{u}=\mathcal{A}\frac{\theta}{\left(1+\theta^{\gamma}\right)^{1/\gamma}} > 0.
\end{align*}
The job-finding probability under the nonlinear parameterization is increasing in $\theta$:
\begin{align*}
\frac{d}{d\theta}\left[\theta q_{\mathcal{A}}\left(\theta\right)\right] & =\mathcal{A}\frac{\left(1+\theta^{\gamma}\right)^{1/\gamma}-\theta\frac{1}{\gamma}\left(1+\theta^{\gamma}\right)^{1/\gamma-1}\gamma\theta^{\gamma-1}}{\left(1+\theta^{\gamma}\right)^{2/\gamma}}\\
 & =\mathcal{A}\frac{\left(1+\theta^{\gamma}\right)^{1/\gamma}-\left(1+\theta^{\gamma}\right)^{1/\gamma-1}\theta^{\gamma}}{\left(1+\theta^{\gamma}\right)^{2/\gamma}}\\
 & =\mathcal{A}\frac{\left(1+\theta^{\gamma}\right)^{1/\gamma}}{\left(1+\theta^{\gamma}\right)^{2/\gamma}}\left(1-\frac{\theta^{\gamma}}{1+\theta^{\gamma}}\right)\\
 & >0.
\end{align*}

For the nonlinear matching technology, when $\mathcal{A<\infty}$,
the job-finding probability is between $0$ and $\mathcal{A}$.
Indeed,
\begin{align*}
\lim_{\theta\rightarrow0}\theta q_{\mathcal{A}}\left(\theta\right)=\lim_{\theta\rightarrow0}\mathcal{A}\frac{\theta}{\left(1+\theta^{\gamma}\right)^{1/\gamma}}=0
\end{align*}
and 
\begin{align*}
\lim_{\theta\rightarrow\infty}\theta q_{A}\left(\theta\right) & =\lim_{\Theta\rightarrow\infty}\mathcal{A}\frac{\theta}{\left(1+\theta^{\gamma}\right)^{1/\gamma}}=\lim_{\theta\rightarrow\infty}\mathcal{A}\frac{1}{\left(1+\theta^{\gamma}\right)^{1/\gamma-1}\theta^{\gamma-1}}\\
 & =\mathcal{A},
\end{align*}
where the second-to-last equality uses L'H\^{o}pital's rule and the
fact that
\begin{align*}
\left(1+\theta^{\gamma}\right)^{1/\gamma-1}\theta^{\gamma-1} & =\left(1+\theta^{\gamma}\right)^{\frac{1-\gamma}{\gamma}}\left(\frac{1}{\theta}\right)^{1-\gamma}\\
 & =\left(1+\theta^{\gamma}\right)^{\frac{1-\gamma}{\gamma}}\left(\frac{1}{\theta}\right)^{1-\gamma}\\
 & =\left(1+\theta^{\gamma}\right)^{\frac{1-\gamma}{\gamma}}\left[\left(\frac{1}{\theta}\right)^{\gamma}\right]^{\frac{1-\gamma}{\gamma}}\\
 & =\left(1+\theta^{\gamma}\right)^{\frac{1-\gamma}{\gamma}}\left(\frac{1}{\theta^{\gamma}}\right)^{\frac{1-\gamma}{\gamma}}\\
 & =\left[\frac{1}{\theta^{\gamma}}\left(1+\theta^{\gamma}\right)\right]^{\frac{1-\gamma}{\gamma}}=\left[1+\frac{1}{\theta^{\gamma}}\right]^{\frac{1-\gamma}{\gamma}}
\end{align*}
and therefore
\begin{align*}
\lim_{\theta\rightarrow\infty}\left(1+\theta^{\gamma}\right)^{1/\gamma-1}\theta^{\gamma-1}=\lim_{\theta\rightarrow\infty}\left[1+\frac{1}{\theta^{\gamma}}\right]^{\frac{1-\gamma}{\gamma}}=1.
\end{align*}
In addition, the fact that the job-finding probability is increasing everywhere implies that the probability a worker finds a job lies between $0$ and $1$ when $\mathcal{A}=1$. 

Similarly,
the job-filling probability for the nonlinear parameterization falls between $0$ and $\mathcal{A}$.
Indeed,
\begin{align*}
\lim_{\theta\rightarrow\infty}q_{A}\left(\theta\right)=\lim_{\theta\rightarrow\infty}\mathcal{A}\frac{1}{\left(1+\theta^{\gamma}\right)^{1/\gamma}}=0
\end{align*}
and
\begin{align*}
\lim_{\theta\rightarrow0}q_{A}\left(\theta\right)=\lim_{\theta\rightarrow\infty}\mathcal{A}\frac{1}{\left(1+\theta^{\gamma}\right)^{1/\gamma}}=\mathcal{A}.
\end{align*}
The fact that the job-filling probability is decreasing everywhere
implies that the probability a job is filled falls between $0$ and $\mathcal{A}$.

The elasticity of matching with respect to unemployment for the nonlinear
parameterization is
\begin{align}
  \label{eq:app:eta-M-u-nonlinear}
  \begin{split}
\eta_{M,u} & =-\frac{\theta q^{\prime}\left(\theta\right)}{q\left(\theta\right)}\\
 & =\frac{\theta\frac{1}{\gamma}\mathcal{A}\left(1+\theta^{\gamma}\right)^{-1/\gamma-1}\gamma\theta^{\gamma-1}}{\mathcal{A}\left(1+\theta^{\gamma}\right)^{-1/\gamma}}\\
 & =\frac{\left(1+\theta^{\gamma}\right)^{-1/\gamma-1}\theta^{\gamma}}{\left(1+\theta^{\gamma}\right)^{-1/\gamma}}\\
 &= \left(1+\theta^{\gamma}\right)^{-1}\theta^{\gamma} \\
  &= \frac{\theta^{\gamma}}{1+\theta^{\gamma}}.    
  \end{split}
\end{align}
As implied by proposition \ref{prop:eta-M-u}, the elasticity in \eqref{eq:app:eta-M-u-nonlinear} falls
inside the unit interval.
Unlike the Cobb-Douglas parameterization, $\eta_{M,u}$ is not constant.

\textbf{Minor discussion}.
While Cobb--Douglas fits the data well, as \citet{petrongolo_pissarides_2001}  and \citet{bleakley_fuhrer_1997} have shown,
not all specifications keep job-finding and job-filling probabilities within the unit interval \citep{den-haan_ramey_watson_2000}.
This feature is one motivation for using the nonlinear technology in business-cycle research like that in \citet{petrosky-nadeau_zhang_2017}.
Although,
\citet[][2639n6]{ljungqvist_sargent_2017} skillfully show how a daily calibration could avoid this outcome and encourage firms to post vacancies.

\clearpage
\pagebreak

\section{Derivations for the Fundamental Surplus Omitted from the Main Text}
\label{sec:app:deriv-fund-surpl}

In this section,
I derive expressions presented in sections \ref{sec:equilibrium} and \ref{sec:decomposition-fundamental-surplus}.
And I provide further details used in the proofs of propositions \ref{prop:unique-theta} and \ref{prop:upsilon}.
Many of the expressions are repeated here so that I can explicitly refer to them.

\subsection{Key Bellman Equations}

Here I repeat the key Bellman equations for the canonical DMP model.

Key Bellman equations in the economy for firms are 
\begin{equation}
\mathcal{J}=y-w+\beta\left[s\mathcal{V}+\left(1-s\right)\mathcal{J}\right],\label{eq:app:J}
\end{equation}
\begin{equation}
\mathcal{V}=-c+\beta\left\{ q\left(\theta\right)\mathcal{J}+\left[1-q\left(\theta\right)\right]\mathcal{V}\right\} .\label{eq:app:V}
\end{equation}
Imposing the zero-profit condition in equation \eqref{eq:app:V} implies
\begin{align*}
0 & =-c+\beta\left\{ q\left(\theta\right)\mathcal{J}+\left[1-q\left(\theta\right)0\right]\right\} \\
\therefore c & =\beta q\left(\theta\right)\mathcal{J}
\end{align*}
or 
\begin{equation}
\mathcal{J}=\frac{c}{\beta q\left(\theta\right)}.\label{eq:app:J-eqm}
\end{equation}
Substituting this result into equation \eqref{eq:app:J} and imposing
the zero-profit condition implies 
\begin{align*}
\mathcal{J} & =y-w+\beta\left[s\mathcal{V}+\left(1-s\right)\mathcal{J}\right]\\
\therefore\frac{c}{\beta q\left(\theta\right)} & =y-w+\beta\left(1-s\right)\frac{c}{\beta q\left(\theta\right)}\\
\therefore\frac{c}{\beta q\left(\theta\right)} & =y-w+\left(1-s\right)\frac{c}{q\left(\theta\right)}\\
\therefore w & =y+\left(1-s\right)\frac{c}{q\left(\theta\right)}-\frac{c}{\beta q\left(\theta\right)}\\
\therefore w & =y+\frac{c}{q\left(\theta\right)}\left(1-s-\frac{1}{\beta}\right)\\
\therefore w & =y+\frac{c}{q\left(\theta\right)}\left(-s-r\right).
\end{align*}
This simplifies to 
\begin{equation}
w=y-\frac{r+s}{q\left(\theta\right)}c.\label{eq:app:w-01}
\end{equation}

The key Bellman equations for workers are 
\begin{equation}
\mathcal{E}=w+\beta\left[sU+\left(1-s\right)\mathcal{E}\right]\label{eq:app:E}
\end{equation}
\begin{equation}
\mathcal{U}=z+\beta\left\{ \theta q\left(\theta\right)\mathcal{E}+\left[1-\theta q\left(\theta\right)\right]\mathcal{U}\right\} .\label{eq:app:U}
\end{equation}

In the canonical matching model, the match surplus,
\begin{align*}
\mathcal{S} = \left(\mathcal{J}-\mathcal{V}\right)+\left(\mathcal{E}-\mathcal{U}\right)
\end{align*}
is
the benefit a firm gains from a productive match over an unfilled vacancy
plus
the benefit a worker gains from employment over unemployment.
The surplus is split between a matched firm--worker pair.
The outcome of Nash bargaining specifies 
\begin{equation}
\mathcal{E}-\mathcal{U}=\phi\mathcal{S}\text{ and }\mathcal{J}=\left(1-\phi\right)\mathcal{S},\label{eq:app:Nash-outcome}
\end{equation}
where $\phi\in\left[0,1\right)$ measures the worker's bargaining power.

\subsection{On the Value of Unemployment}
\label{sec:app:value-unemployment}

The next part of the derivation yields a value for unemployment.
Solving equation \eqref{eq:app:J} for $\mathcal{J}$ yields 
\begin{align*}
\mathcal{J} & =y-w+\beta\left(1-s\right)\mathcal{J}\\
\therefore\mathcal{J}\left[1-\beta\left(1-s\right)\right] & =y-w\\
\therefore\mathcal{J} & =\frac{y-w}{1-\beta\left(1-s\right)}.
\end{align*}
And solving \eqref{eq:app:E} for $\mathcal{E}$ yields 
\begin{align*}
\mathcal{E} &= w+\beta s\mathcal{U}+\beta\left(1-s\right)\mathcal{E}\\
\therefore\mathcal{E}\left[1-\beta\left(1-s\right)\right] & =w+\beta s\mathcal{U}\\
\therefore\mathcal{E} & =\frac{w+\beta s\mathcal{U}}{1-\beta\left(1-s\right)}\\
 & =\frac{w}{1-\beta\left(1-s\right)}+\frac{\beta s\mathcal{U}}{1-\beta\left(1-s\right)}.
\end{align*}
Developing the expressions in \eqref{eq:app:Nash-outcome} for the outcome of Nash bargaining yields 
\begin{align*}
\mathcal{E}-\mathcal{U} & =\phi\mathcal{S}\\
 & =\phi\frac{\mathcal{J}}{1-\phi}
\end{align*}
and using the just-derived expressions for $\mathcal{J}$ and $\mathcal{E}$
yields
\begin{align*}
\underbrace{\left[\frac{w}{1-\beta\left(1-s\right)}+\frac{\beta s\mathcal{U}}{1-\beta\left(1-s\right)}\right]}_{\mathcal{E}}-\mathcal{U}=\frac{\phi}{1-\phi}\underbrace{\left[\frac{y-w}{1-\beta\left(1-s\right)}\right]}_{\mathcal{J}}.
\end{align*}
Developing this expression yields
\begin{align}
  \label{eq:app:w-02}
  \begin{split}
w+\beta s\mathcal{U}-\left[1-\beta\left(1-s\right)\right]\mathcal{U} &= \frac{\phi}{1-\phi}\left(y-w\right) \\
\therefore w+\beta s\mathcal{U}-\mathcal{U}+\beta\mathcal{U}-s\beta\mathcal{U} & =\frac{\phi}{1-\phi}\left(y-w\right) \\
\therefore w & =\frac{\phi}{1-\phi}\left(y-w\right)+\left(1-\beta\right)\mathcal{U} \\
\therefore\left(1-\phi\right)w & =\phi\left(y-w\right)+\left(1-\phi\right)\left(1-\beta\right)\mathcal{U} \\
\therefore w & =\phi y+\left(1-\beta\right)\mathcal{U}-\phi\left(1-\beta\right)\mathcal{U}.  
  \end{split}
\end{align}
Using the fact that
\begin{align*}
1-\beta=1-\frac{1}{1+r}=\frac{1+r-1}{1+r}=\frac{r}{1+r},
\end{align*}
the latter expression can be written as 
\begin{equation}
w=\frac{r}{1+r}\mathcal{U}+\phi\left(y-\frac{r}{1+r}\mathcal{U}\right),\label{eq:w-02}
\end{equation}
which is equation (9) in \citet[2634]{ljungqvist_sargent_2017}.
The value $r\mathcal{U}/\left(1+r\right)$ in equation \eqref{eq:w-02} is the ``annuity value of being unemployed'' \citep[][2634]{ljungqvist_sargent_2017}.

To get an expression for the annuity value of unemployment,
$r\mathcal{U}/\left(1+r\right)$,
I solve equation \eqref{eq:app:U} for $\mathcal{E}-\mathcal{U}$ and
substitute this expression and the expression in \eqref{eq:app:J-eqm} into \eqref{eq:app:Nash-outcome}.

These steps are taken next:

Turning to equation \eqref{eq:app:U}: 
\begin{align*}
\mathcal{U} & =z+\beta\left\{ \theta q\left(\theta\right)\mathcal{E}+\left[1-\theta q\left(\theta\right)\right]\mathcal{U}\right\} \\
\therefore\mathcal{U} & =z+\beta\theta q\left(\theta\right)\mathcal{E}+\beta\mathcal{U}-\beta\theta q\left(\theta\right)\mathcal{U}\\
\therefore\mathcal{U} & =z+\left[\beta\theta q\left(\theta\right)\right]\left(\mathcal{E}-\mathcal{U}\right)+\beta\mathcal{U}\\
\mathcal{U}\left(1-\beta\right)-z & =\left[\beta\theta q\left(\theta\right)\right]\left(\mathcal{E}-\mathcal{U}\right)\\
\therefore\mathcal{E}-\mathcal{U} & =\frac{1}{\beta\theta q\left(\theta\right)}\left[\left(1-\beta\right)\mathcal{U}-z\right]\\
 & =\frac{1+r}{\theta q\left(\theta\right)}\left[\left(1-\beta\right)\mathcal{U}-z\right]\\
 & =\frac{r}{\theta q\left(\theta\right)}\mathcal{U}-\frac{1+r}{\theta q\left(\theta\right)}z.
\end{align*}
Using this expression for $\mathcal{E}-\mathcal{U}$ in \eqref{eq:app:Nash-outcome} yields 
\begin{align*}
\mathcal{E}-\mathcal{U} & =\phi\mathcal{S}\\
\therefore\frac{r}{\theta q\left(\theta\right)}\mathcal{U}-\frac{1+r}{\theta q\left(\theta\right)}z & =\phi\mathcal{S}\\
 & =\phi\left(\frac{\mathcal{J}}{1-\phi}\right)\\
 & =\frac{\phi}{1-\phi}\frac{c}{\beta q\left(\theta\right)},
\end{align*}
where the last equality uses the expression for $\mathcal{J}$ in
equation \eqref{eq:app:J-eqm}.
Developing this expression yields
\begin{align}
\frac{r}{\theta q\left(\theta\right)}\mathcal{U}-\frac{1+r}{\theta q\left(\theta\right)}z & =\frac{\phi}{1-\phi}\frac{c}{\beta q\left(\theta\right)}\nonumber \\
\therefore r\mathcal{U}-\left(1+r\right)z & =\frac{\phi}{1-\phi}\frac{1}{\beta}c\theta\nonumber \\
\therefore r\mathcal{U}-\left(1+r\right)z & =\frac{\phi}{1-\phi}\left(1+r\right)c\theta\nonumber \\
\therefore\frac{r}{1+r}\mathcal{U} & =z+\frac{\phi c\theta}{1-\phi},\label{eq:app:U-annuity}
\end{align}
which is equation (10) in \citet[2634]{ljungqvist_sargent_2017}.

Substituting equation \eqref{eq:app:U-annuity} into equation \eqref{eq:w-02} yields an expression for the wage: 
\begin{align*}
w & =\frac{r}{1+r}\mathcal{U}+\phi\left(y-\frac{r}{1+r}\mathcal{U}\right)\\
 & =z+\frac{\phi c\theta}{1-\phi}+\phi\left(y-z-\frac{\phi c\theta}{1-\phi}\right)\\
 & =\left(1-\phi\right)z+\left(1-\phi\right)\left(\frac{\phi c\theta}{1-\phi}\right)+\phi y\\
 & =\left(1-\phi\right)z+\phi c\theta+\phi y
\end{align*}
or 
\begin{equation}
w=z+\phi\left(y-z+\theta c\right),\label{eq:app:w-03}
\end{equation}
which is equation (11) in \citet[2635]{ljungqvist_sargent_2017}.

\subsection{Existence and Uniqueness}
\label{sec:app:existence-uniqueness}

The two expressions for the wage rate in \eqref{eq:app:w-01} and \eqref{eq:app:w-03} jointly determine the equilibrium value of $\theta$:
\begin{align*}
y-\frac{r+s}{q\left(\theta\right)}c & =z+\phi\left(y-z+\theta c\right).
\end{align*}
Developing this expression yields 
\begin{align*}
y-\frac{r+s}{q\left(\theta\right)}c & =z+\phi\left(y-z+\theta c\right)\\
\therefore y-z & =\phi\left(y-z+\theta c\right)+\frac{r+s}{q\left(\theta\right)}c\\
\therefore\left(1-\phi\right)\left(y-z\right) & =\phi\theta c+\frac{r+s}{q\left(\theta\right)}c\\
 & =\left[\frac{\phi\theta q\left(\theta\right)}{q\left(\theta\right)}+\frac{r+s}{q\left(\theta\right)}\right]c,
\end{align*}
which can be re-arranged to yield an expression in $\theta$ alone:
\begin{equation}
y-z=\frac{r+s+\phi\theta q\left(\theta\right)}{\left(1-\phi\right)q\left(\theta\right)}c.\label{eq:app:eqm-theta}
\end{equation}
Equation \eqref{eq:app:eqm-theta} implicitly defines an equilibrium
level of tightness.
The expression agrees with equation (12) in
\citet[][2635]{ljungqvist_sargent_2017}.
\citet{pissarides_2000} also shows how similar equations can be manipulated to yield a single expression in $\theta$ alone.

Existence and uniqueness of equilibrium tightness is established by
proposition \ref{prop:app:unique-theta}. 
\begin{prop}
\label{prop:app:unique-theta} Suppose $y>z$, which says that workers
produce more of the homogeneous consumption good at work than at home,
and suppose that $\left(1-\phi\right)\left(y-z\right)/\left(r+s\right)>c$.
Then a unique $\theta>0$ solves the relationship in \eqref{eq:app:eqm-theta}.

The condition that $\left(1-\phi\right)\left(y-z\right)/\left(r+s\right)>c$
requires that the value of the first job opening is positive.

The steady-state level of unemployment is 
\begin{equation}
u=\frac{s}{s+f\left(\theta\right)}=\frac{s}{s+\theta q\left(\theta\right)}.\label{eq:steady-state-u}
\end{equation}
\end{prop}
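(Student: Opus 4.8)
The plan is to recycle the strategy behind proposition \ref{prop:unique-theta}, recasting the equilibrium condition \eqref{eq:app:eqm-theta} as the problem of locating a root of a single continuous function of $\theta$, and then to append the short flow-balance derivation of the steady-state unemployment rate \eqref{eq:steady-state-u}. First I would define
\[
\mathcal{T}(\theta) = \frac{y-z}{c} - \frac{r+s+\phi\theta q(\theta)}{(1-\phi)\,q(\theta)},
\]
so that a solution to \eqref{eq:app:eqm-theta} is exactly a zero of $\mathcal{T}$. Because $q$ is continuous and strictly positive on $(0,\infty)$, $\mathcal{T}$ is a composition and quotient of continuous functions and hence continuous there, which sets up an intermediate-value argument.

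For existence I would pin down a sign change. As $\theta\to 0$, the regularity conditions \eqref{eq:app:limit-fill} and \eqref{eq:app:limit-find} give $q(\theta)\to 1$ and $\theta q(\theta)\to 0$, so $\lim_{\theta\to 0}\mathcal{T}(\theta) = (y-z)/c - (r+s)/(1-\phi)$, which the hypothesis $(1-\phi)(y-z)/(r+s)>c$ renders strictly positive. To exhibit a point where $\mathcal{T}$ is negative I would evaluate at $\tilde\theta^{\bullet} = (1-\phi)(y-z)/(\phi c)$; the algebra is engineered so that the $\phi\theta$ piece of the subtracted term equals $(y-z)/c$ exactly, leaving $\mathcal{T}(\tilde\theta^{\bullet}) = -(r+s)/[(1-\phi)\,q(\tilde\theta^{\bullet})] < 0$. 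The intermediate-value theorem then delivers a root $\theta\in(0,\tilde\theta^{\bullet})$.

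Uniqueness I would get from strict monotonicity. Writing the subtracted term as $\tfrac{1}{1-\phi}[(r+s)/q(\theta) + \phi\theta]$, both summands are nondecreasing in $\theta$, and the first is strictly increasing because $q'<0$—established via $\eta_{M,u}>0$ in proposition \ref{prop:eta-M-u}—makes $1/q$ strictly increasing; hence the bracket is strictly increasing and $\mathcal{T}$ strictly decreasing, so it can vanish only once. For the economic reading I would reproduce the initial-vacancy computation from the main text: taking $\theta\to 0$ in \eqref{eq:app:V} with $q\to 1$ and the limiting wage $w\to \phi y+(1-\phi)z$ yields $\lim_{\theta\to 0}\mathcal{V} = -c + \beta(1-\phi)(y-z)/[1-\beta(1-s)]$, and requiring this to be positive rearranges to the stated condition. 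Finally, the rate \eqref{eq:steady-state-u} follows by equating the separation inflow $s(1-u)$ with the job-finding outflow $\theta q(\theta)u$ in a unit labor force and solving for $u$.

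The argument is largely mechanical once $\mathcal{T}$ is in hand; the only genuinely delicate choice is the finite comparison point $\tilde\theta^{\bullet}$, whose value is selected precisely so the fundamental-surplus terms $(y-z)/c$ cancel and the sign of $\mathcal{T}(\tilde\theta^{\bullet})$ becomes transparent. The assumptions I must lean on throughout are that the regularity limits \eqref{eq:app:limit-fill}–\eqref{eq:app:limit-find} and the strict monotonicity $q'<0$ hold for the general matching technology, so that the limit at zero is well defined and the decreasing property is strict rather than merely weak.
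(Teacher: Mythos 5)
Your proposal is correct and follows essentially the same route as the paper's proof: the same auxiliary function $\mathcal{T}$, the same sign change between $\theta\to 0$ (using $(1-\phi)(y-z)/(r+s)>c$) and the same comparison point $\tilde\theta^{\bullet}=(1-\phi)(y-z)/(\phi c)$, the intermediate-value theorem for existence, strict monotonicity of $\mathcal{T}$ for uniqueness, the identical limiting-vacancy thought experiment for the economic interpretation, and the flow-balance derivation of \eqref{eq:steady-state-u}. The only quibble is the direction of one citation: $q'<0$ is not \emph{established via} $\eta_{M,u}>0$ in proposition \ref{prop:eta-M-u}---there the inequality $\eta_{M,u}>0$ is \emph{derived from} $q'<0$, which itself follows from $M$ being increasing in $u$---but since the two facts are equivalent under the identity $\eta_{M,u}=-\theta q'(\theta)/q(\theta)$, this does not affect the validity of your argument.
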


\begin{proof}
I establish proposition \ref{prop:app:unique-theta} in three steps:
\end{proof}
\begin{enumerate}
\item \label{item:unique-theta:existence-uniqueness} Existence and uniqueness of the economy's steady-state equilibrium
are established.
\item \label{item:unique-theta:initial-vacancy} The required condition on parameter values that guarantees an equilibrium
is then interpreted as the positive value of posting an initial vacancy,
offering an alternative interpretation from the one given by \citet{pissarides_2000}.
\item \label{item:unique-theta:eqm-ur} When the number of jobs created equals the number of jobs destroyed,
the familiar expression for steady-state unemployment depends on the
rate of separation to the sum of the rates of separation and finding.
This result is well known and is included for completeness \citep[see, for example, ][]{pissarides_2000}. 
\end{enumerate}

\begin{proof}
  \ul{\mbox{Step \ref{item:unique-theta:existence-uniqueness}}}:
  To establish existence of an equilibrium, I define the function 
\begin{align*}
\mathcal{T}\left(\tilde{\theta}\right) & =\frac{y-z}{c}-\frac{r+s+\phi\tilde{\theta}q\left(\tilde{\theta}\right)}{\left(1-\phi\right)q\left(\tilde{\theta}\right)}\\
 & =\frac{y-z}{c}-\frac{r+s}{\left(1-\phi\right)q\left(\tilde{\theta}\right)}-\frac{\phi}{1-\phi}\tilde{\theta}.
\end{align*}
Then, using the fact that $\lim_{\tilde{\theta}\rightarrow0}q\left(\tilde{\theta}\right)=1$,
\[
\lim_{\tilde{\theta}\rightarrow0}\mathcal{T}\left(\tilde{\theta}\right)=\frac{y-z}{c}-\frac{r+s}{1-\phi}>0,
\]
where the inequality uses the assumption that $\left(1-\phi\right)\left(y-z\right)/\left(r+s\right)>0$.
Additionally, I define 
\[
\tilde{\theta}^{\bullet}=\frac{1-\phi}{\phi}\frac{y-z}{c}>0,
\]
which is positive because $y>z$ and $\phi\in\left[0,1\right)$. Then
\begin{align*}
\mathcal{T}\left(\tilde{\theta}^{\bullet}\right) & =\frac{y-z}{c}-\frac{r+s}{\left(1-\phi\right)q\left(\tilde{\theta}^{\bullet}\right)}-\frac{\phi}{1-\phi}\tilde{\theta}^{\bullet}\\
 & =\frac{y-z}{c}-\frac{r+s}{\left(1-\phi\right)q\left(\tilde{\theta}^{\bullet}\right)}-\frac{\phi}{1-\phi}\frac{1-\phi}{\phi}\frac{y-z}{c}\\
 & =-\frac{r+s}{\left(1-\phi\right)q\left(\tilde{\theta}^{\bullet}\right)}\\
 & <0,
\end{align*}
where the inequality follows from the fact that $q\left(\tilde{\theta}^{\bullet}\right)>0$.
Because $\mathcal{T}$ is a combination of continuous functions, it
is also continuous. Therefore, an application of the intermediate
value theorem establishes that there exists $\theta^{\star}\in\left(0,\frac{1-\phi}{\phi}\frac{y-z}{c}\right)$
such that $\mathcal{T}\left(\theta^{\star}\right)=0$.

The uniqueness part of proposition \ref{prop:app:unique-theta} comes
from the fact that $\mathcal{T}$ is decreasing. Indeed, 
\[
\mathcal{T}^{\prime}\left(\tilde{\theta}\right)=\frac{r+s}{\left(1-\phi\right)\left[q\left(\tilde{\theta}\right)\right]^{2}}q^{\prime}\left(\tilde{\theta}\right)-\frac{\phi}{\left(1-\phi\right)}<0,
\]
and the inequality comes from the fact that $q^{\prime}<0$.

\ul{\mbox{Step \ref{item:unique-theta:initial-vacancy}}}:
The condition that $\left(1-\phi\right)\left(y-z\right)/\left(r+s\right)>c$ restates
the requirement that the initial vacancy is profitable.
The following thought experiment demonstrates why.

Starting from a given level of unemployment, which is guaranteed with
exogenous separations, the value of posting an initial vacancy
is computed as $\lim_{\theta\rightarrow0}\mathcal{V}.$ In this thought
experiment, the probability that the vacancy is filled is $1$, as
$\lim_{\theta\rightarrow0}q\left(\theta\right)=1$. The following
period the firm earns the value of a productive match, which equals
the flow payoff $y-w$ plus the value of a productive match discounted
by $\beta\left(1-s\right)$: $\mathcal{J}=y-w+\beta\left(1-s\right)\mathcal{J}$.
Solving this expression for $\mathcal{J}$ yields 
\begin{align*}
\mathcal{J} & =y-w+\beta\left(1-s\right)\mathcal{J}\\
\therefore\mathcal{J}\left[1-\beta\left(1-s\right)\right] & =y-w\\
\therefore\mathcal{J} & =\frac{y-w}{1-\beta\left(1-s\right)}.
\end{align*}

The wage rate paid by the firm, looking at the expression in \eqref{eq:app:w-03}, is 
\begin{align*}
\lim_{\theta\rightarrow0}w & =\lim_{\theta\rightarrow0}z+\phi\left(y-z+\theta c\right)\\
 & =\phi y+\left(1-\phi\right)z,
\end{align*}
making
\begin{align*}
\mathcal{J}=\frac{\left(1-\phi\right)\left(y-z\right)}{1-\beta\left(1-s\right)}.
\end{align*}
Using this expression in the value of an initial vacancy 
\begin{align*}
\lim_{\theta\rightarrow0}\mathcal{V} & =\lim_{\theta\rightarrow0}\left\langle -c+\beta\left\{ q\left(\theta\right)\mathcal{J}+\left[1-q\left(\theta\right)\right]\mathcal{V}\right\} \right\rangle \\
 & =-c+\beta\frac{\left(1-\phi\right)\left(y-z\right)}{1-\beta\left(1-s\right)}\\
 & >0.
\end{align*}
The inequality stipulates that in order to start the process of posting
vacancies, the first vacancy needs to be profitable. Developing this inequality yields 
\begin{align*}
\beta\frac{\left(1-\phi\right)\left(y-z\right)}{1-\beta\left(1-s\right)} & >c\\
\therefore\frac{1}{1+r}\frac{\left(1-\phi\right)\left(y-z\right)}{1-\frac{1}{1+r}\left(1-s\right)} & >c\\
\therefore\frac{\left(1-\phi\right)\left(y-z\right)}{1+r-1+s} & >c
\end{align*}
or
\begin{align*}
\frac{\left(1-\phi\right)\left(y-z\right)}{r+s}>c.
\end{align*}
Which is the condition listed in proposition \ref{prop:app:unique-theta}.

\ul{\mbox{Step \ref{item:unique-theta:eqm-ur}}}: The
steady-state level of unemployment comes from the evolution of unemployment and steady-state equilibrium where $u_{t+1}=u_{t}=u$ and $\theta_{t} = \theta$.
Next period's unemployment comprises
unemployed workers who did not find a job, $\left[1-f\left(\theta_{t}\right)\right]u_{t}$, plus
employed workers who separate from jobs, $s\left(1-u_{t}\right)$, where $1-u_{t}$ is the level of employment after
the normalization that the size of the labor force equal one.
From this law of motion:
\begin{align*}
u_{t+1} & =\left[1-f\left(\theta_{t}\right)\right]u_{t}+s\left(1-u_{t}\right)\\
\therefore u & =\left[1-f\left(\theta\right)\right]u+s\left(1-u\right)\\
\therefore0 & =-f\left(\theta\right)u+s-su\\
\therefore\left(s+f\right)u & =s
\end{align*}
or $u=s/\left[s+f\left(\theta\right)\right]$, establishing \eqref{eq:steady-state-u}. 
\end{proof}

\subsection{Joint Parameterization of $c$ and $A$}
\label{sec:joint-param-c-A}

The joint parameterization of the cost of posting a vacancy, $c$, and matching efficiency, $A$, is a
``choice of normalization'' \citep[p~12, footnote 33 of the accompanying online appendix to][]{ljungqvist_sargent_2017}.
Given a calibration $c$ and $A$,
which produces an equilibrium market tightness through the implicit expression for $\theta$ in \eqref{eq:app:eqm-theta},
the same equilibrium market tightness and job-finding probability can be attained with an alternative parameterization, $\hat{c}$ and $\hat{A}$.

See \citet{kiarsi_2020} for the importance of the cost of posting a vacancy.

To verify this claim,
I differentiate matching technologies by explicitly referencing the matching efficiency parameter,
which affects the technologies as a multiplicative constant.
Both \eqref{eq:app:M-cobb-douglas} and \eqref{eq:app:M-nonlinear} can be expressed this way.
The two job-filling rates, for example, will be $Aq \left( \theta \right)$        and $\hat{A}        q \left( \theta \right)$.
The two job-finding rates, for example, will be $A\theta q \left( \theta \right)$ and $\hat{A} \theta q \left( \theta \right)$.

I start by letting $\hat{c} = \zeta c$ for some $\zeta > 0$ and $\zeta < \left[ A q_{A} \left( \theta \right) \right]^{-1}$.
The expression for equilibrium market tightness in \eqref{eq:app:eqm-theta} becomes 
\begin{align*}
y-z & =\frac{r+s+\phi\hat{A}\hat{\theta}q \left(\hat{\theta}\right)}{\left(1-\phi\right)\hat{A}q \left(\hat{\theta}\right)}\hat{c}\\
\therefore\frac{1-\phi}{\zeta c}\left(y-z\right) & =\frac{r+s}{\hat{A}q \left(\hat{\theta}\right)}+\phi\hat{\theta}\\
\therefore\frac{\left(1-\phi\right)\left(y-z\right)}{c} & =\zeta\frac{r+s}{\hat{A}q\left(\hat{\theta}\right)}+\phi\zeta\hat{\theta}.
\end{align*}
Comparing this to the original parameterization yields
\begin{align*}
\frac{\left(1-\phi\right)\left(y-z\right)}{c} & =\zeta\frac{r+s}{\hat{A}q\left(\hat{\theta}\right)}+\phi\zeta\hat{\theta}=\frac{r+s}{Aq \left(\theta\right)}+\phi\theta.
\end{align*}
For these to be equal, it must be the case that
\begin{enumerate}
\item\label{item:1} $\zeta \hat{\theta} = \theta$ and
\item\label{item:2} $\hat{A}q \left(\hat{\theta}\right)\frac{1}{\zeta} = Aq \left(\theta\right)$.
\end{enumerate}
Condition \ref{item:1} implies $\hat{\theta} = \theta / \zeta$.
Condition \ref{item:2} implies
\begin{align*}
  \hat{A}q \left(\hat{\theta}\right)\frac{1}{\zeta} &= Aq \left(\theta\right) \\
\therefore\hat{A}q \left(\theta/\zeta\right) & =\zeta Aq \left(\theta\right)\\
\therefore\hat{A} & =\zeta\frac{Aq \left(\theta\right)}{q \left(\theta/\zeta\right)}.
\end{align*}
The job-finding rate is the same:
\begin{align*}
\hat{\theta}\hat{A}q \left(\hat{\theta}\right) &= \left(\frac{\theta}{\zeta}\right)\zeta\frac{Aq \left(\theta\right)}{q\left(\theta/\zeta\right)}q\left(\frac{\theta}{\zeta}\right)\\
 &= A\theta q \left(\theta\right).
\end{align*}
The value for job creation, from \eqref{eq:app:J-eqm}, is also the same:
\begin{align*}
\hat{J} & =\frac{\hat{c}}{\beta\hat{A}q_{A}\left(\hat{\theta}\right)}\\
 & =\frac{\zeta c}{\beta\zeta\frac{Aq_{A}\left(\theta\right)}{q_{A}\left(\theta/\zeta\right)}q_{A}\left(\theta/\zeta\right)}\\
 & =\frac{c}{\beta Aq_{A}\left(\theta\right)} \\
  &= J.
\end{align*}

The job-filling probability is proportional to the original job-filling probability:
\begin{align*}
\hat{A}q_{A}\left(\hat{\theta}\right)=\zeta Aq_{A}\left(\theta\right).
\end{align*}

The choice of $\zeta$ must also be careful to not push $\zeta Aq \left(\theta\right)$ outside of $\left(0,1\right)$,
which is guaranteed if
\begin{align*}
0 < \zeta < \frac{1}{A q_{A} \left( \theta \right)}.
\end{align*}
When the matching technology takes the Cobb--Douglas form,
this condition amounts to $\zeta < \theta^{\alpha} / A$,
which is the condition reported by \citet{ljungqvist_sargent_2017} in their online appendix.

When the matching technology is Cobb--Douglas, then the $\hat{\theta}=\theta/\zeta$ and
\begin{align*}
\hat{A} & =\zeta\frac{Aq_{A}\left(\theta\right)}{q_{A}\left(\theta/\zeta\right)}\\
 & =\zeta\frac{A\theta^{-\alpha}}{\left(\frac{\theta}{\zeta}\right)^{-\alpha}}\\
 & =\zeta\frac{A\theta^{-\alpha}}{\theta^{-\alpha}\zeta^{\alpha}}\\
 & =\zeta A\zeta^{-\alpha}\\
 & =\zeta^{1-\alpha}A,
\end{align*}
which is reported in footnote 33 of the online appendix to \citet{ljungqvist_sargent_2017}.

When the matching technology takes the form of the nonlinear case given in \eqref{eq:app:M-nonlinear}, a case not considered by \citet{ljungqvist_sargent_2017},
where $q \left(\theta\right) = A\left(1+\theta^{\gamma}\right)^{-1/\gamma}$,
then $\hat{\theta}=\theta/\zeta$ and
\begin{align*}
\hat{A} & =\zeta\frac{Aq_{A}\left(\theta\right)}{q_{A}\left(\theta/\zeta\right)}\\
 & =\zeta\frac{A\left(1+\theta^{\gamma}\right)^{-1/\gamma}}{\left[1+\left(\theta/\zeta\right)^{\gamma}\right]^{-1/\gamma}}.
\end{align*}

\subsection{A Decomposition of the Elasticity of Market Tightness and\\the Fundamental Surplus}
\label{sec:decomp-elast-mark}

This section follows section II.A of \citet{ljungqvist_sargent_2017}, beginning on page 2636.

The elasticity of tightness with respect to productivity is
\begin{align*}
\eta_{\theta,y} \coloneq \frac{d\theta}{dy}\frac{\theta}{y}\approx\frac{\Delta\theta/\theta}{\Delta y/y},
\end{align*}
where the approximation indicates that we are talking about the percent
change in tightness with respect to the percent change in $y$. Following
\citet{ljungqvist_sargent_2017}, in this section, I decompose this
key elasticity into two factors:
The first is a factor bounded away from $1$ and the inverse of the elasticity of matching with respect to unemployment.
The second is the inverse of fundamental surplus fraction.

To uncover $\eta_{\theta,y}$, note that equation \eqref{eq:app:eqm-theta}
can be written 
\begin{align}
  \label{eq:app:implicit-theta}
  \begin{split}
y-z & =\frac{r+s+\phi\theta q\left(\theta\right)}{\left(1-\phi\right)q\left(\theta\right)}c \\
\therefore\frac{1-\phi}{c}\left(y-z\right) & =\frac{r+s+\phi\theta q\left(\theta\right)}{q\left(\theta\right)} \\
 & =\frac{r+s}{q\left(\theta\right)}+\phi\theta.    
  \end{split}
\end{align}
Define
\begin{align*}
\digamma\left(\theta,y\right) \coloneq \frac{1-\phi}{c}\left(y-z\right)-\frac{r+s}{q\left(\theta\right)}-\phi\theta.
\end{align*}
Then implicit differentiation implies 
\begin{align*}
\frac{d\theta}{dy} & =-\frac{\partial\digamma/\partial y}{\partial\digamma/\partial\theta}=-\frac{\frac{1-\phi}{c}}{\frac{r+s}{\left[q\left(\theta\right)\right]^{2}}q^{\prime}\left(\theta\right)-\phi}\\
 & =-\frac{\left[\frac{r+s}{q\left(\theta\right)}+\phi\theta\right]\frac{1}{y-z}}{\frac{r+s}{\left[q\left(\theta\right)\right]^{2}}q^{\prime}\left(\theta\right)-\phi},
\end{align*}
where the last equality uses the equality in \eqref{eq:app:eqm-theta}.
Developing this expression yields 
\begin{align}
  \label{eq:app:devel-01}    
  \begin{split}
\frac{d\theta}{dy} & =-\frac{\left[\frac{r+s}{q\left(\theta\right)}+\phi\theta\right]}{\frac{r+s}{\left[q\left(\theta\right)\right]^{2}}q^{\prime}\left(\theta\right)-\phi}\frac{1}{y-z}\times\frac{\theta q\left(\theta\right)}{\theta q\left(\theta\right)} \\
 & =-\frac{\left[r+s+\phi\theta q\left(\theta\right)\right]}{\left(r+s\right)\frac{q^{\prime}\left(\theta\right)\theta}{q\left(\theta\right)}-\phi\theta q\left(\theta\right)}\frac{\theta}{y-z}.
  \end{split}
\end{align}
The expression in the denominator is related to the elasticity of matching with respect to unemployment.

Using the expression for $\eta_{M,u}$ in \eqref{eq:app:eta-M-u} in the
developing expression for $d\theta/dy$ yields 
\begin{align}
  \label{eq:app:dtheta-dy}
  \begin{split}
\frac{d\theta}{dy} & =-\frac{\left[r+s+\phi\theta q\left(\theta\right)\right]}{\left(r+s\right)\left(\frac{q^{\prime}\left(\theta\right)\theta}{q\left(\theta\right)}\right)-\phi\theta q\left(\theta\right)}\frac{\theta}{y-z} \\
 & =\frac{r+s+\phi\theta q\left(\theta\right)}{\left(r+s\right)\eta_{M,u}+\phi\theta q\left(\theta\right)}\frac{\theta}{y-z}.    
  \end{split}
\end{align}
Further developing this expression yields 
\begin{align*}
\frac{d\theta}{dy} & =\frac{r+s+\phi\theta q\left(\theta\right)}{\left(r+s\right)\eta_{M,u}+\phi\theta q\left(\theta\right)}\frac{\theta}{y-z}\\
 & =\frac{r+s+\left(r+s\right)\eta_{M,u}-\left(r+s\right)\eta_{M,u}+\phi\theta q\left(\theta\right)}{\left(r+s\right)\eta_{M,u}+\phi\theta q\left(\theta\right)}\frac{\theta}{y-z}\\
 & =\left[1+\frac{r+s-\left(r+s\right)\eta_{M,u}}{\left(r+s\right)\eta_{M,u}+\phi\theta q\left(\theta\right)}\right]\frac{\theta}{y-z}\\
 & =\left[1+\frac{\left(r+s\right)\left(1-\eta_{M,u}\right)}{\left(r+s\right)\eta_{M,u}+\phi\theta q\left(\theta\right)}\right]\frac{\theta}{y-z}.
\end{align*}

And this expression implies 
\begin{align}
  \label{eq:app:decomp}
  \begin{split}
    \eta_{\theta,y} & =\frac{d\theta}{dy}\frac{y}{\theta} \\
 & =\left[1+\frac{\left(r+s\right)\left(1-\eta_{M,u}\right)}{\left(r+s\right)\eta_{M,u}+\phi\theta q\left(\theta\right)}\right]\frac{y}{y-z} \\
 & =\Upsilon\frac{y}{y-z},
  \end{split}
\end{align}
which is a fundamental result in \citet{ljungqvist_sargent_2017},
expressed in their equation (15) on page 2636.
The expression in \eqref{eq:app:decomp} decomposes the elasticity of tightness with respect to productivity
into two factors. \citet{ljungqvist_sargent_2017} focus on the second
factor because the first factor, $\Upsilon$, ``has an upper bound
coming from a consensus about values of the elasticity of matching
with respect to unemployment.''

As long as the conditions for an interior equilibrium in proposition
\ref{prop:unique-theta} hold, $\Upsilon$ is bounded above by $1/\eta_{M,u}$.
\citet{ljungqvist_sargent_2017} establish this fact by noting that
$\Upsilon$ in \eqref{eq:app:decomp} can be written as
\begin{align*}
\Upsilon\left(\chi\right)=\left[1+\frac{\left(r+s\right)\left(1-\eta_{M,u}\right)}{\left(r+s\right)\eta_{M,u}+\chi\phi\theta q\left(\theta\right)}\right]
\end{align*}
and noting that $\Upsilon\left(\chi\right)$ can be viewed as a function
of $\chi$ and equal to $\Upsilon$ when $\chi=1$.
Evaluating $\Upsilon\left(\chi\right)$ at $\chi=0$ implies 
\begin{align*}
\Upsilon\Big\vert_{\chi=0} & =1+\frac{\left(r+s\right)\left(1-\eta_{M,u}\right)}{\left(r+s\right)\eta_{M,u}}=1+\frac{1-\eta_{M,u}}{\eta_{M,u}}\\
 & =\frac{1}{\eta_{M,u}}>1,
\end{align*}
where the inequality is established in proposition \ref{prop:eta-M-u}.
Moreover, $\Upsilon$ is decreasing in $\chi$:
\begin{align*}
\frac{\partial\Upsilon}{\partial\chi}=-\frac{\left(r+s\right)\left(1-\eta_{M,u}\right)}{\left[\left(r+s\right)\eta_{M,u}+\chi\phi\theta q\left(\theta\right)\right]^{2}}\chi\theta q\left(\theta\right)<0.
\end{align*}
Thus
\begin{align*}
1<\Upsilon \coloneq \Upsilon\left(1\right)<\Upsilon\left(0\right)=\frac{1}{\eta_{M,u}}.
\end{align*}

These two facts establish that $\Upsilon$ is bounded above by $1/\eta_{M,u}$.
Moreover, the expression for $\Upsilon$, defined in \eqref{eq:app:decomp},
establishes that $\Upsilon$ is bounded below by $1$.
The results are collected in proposition \ref{prop:app:upsilon}.
\begin{prop}
  \label{prop:app:upsilon}
  In the canonical DMP search model,
  which features
  a general matching technology,
  random search, linear utility, workers with identical capacities for work,
  exogenous separations, and no disturbances in aggregate productivity,
  the elasticity of market tightness with respect to productivity can be decomposed as  
\begin{align*}
\eta_{\theta,y}=\Upsilon\frac{y}{y-z},
\end{align*}
where the second factor is the inverse of fundamental surplus fraction
and the first factor is bounded below by $1$ and above by $1/\eta_{M,u}$:
\begin{align*}
1<\Upsilon<\frac{1}{\eta_{M,u}}.
\end{align*}
\end{prop}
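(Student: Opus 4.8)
The plan is to derive the decomposition by implicitly differentiating the equilibrium condition \eqref{eq:app:eqm-theta} and then to establish the two bounds on $\Upsilon$ as separate, largely algebraic, steps. First I would rewrite \eqref{eq:app:eqm-theta} in the form $\frac{1-\phi}{c}(y-z) = \frac{r+s}{q(\theta)} + \phi\theta$ and set $\digamma(\theta,y) = \frac{1-\phi}{c}(y-z) - \frac{r+s}{q(\theta)} - \phi\theta$, so that the equilibrium is the level set $\digamma = 0$. Since $q$ is differentiable with $q' < 0$ and the equilibrium $\theta$ is interior by Proposition \ref{prop:app:unique-theta}, the implicit function theorem applies and gives $d\theta/dy = -(\partial\digamma/\partial y)/(\partial\digamma/\partial\theta)$, with $\partial\digamma/\partial y = (1-\phi)/c$ and $\partial\digamma/\partial\theta = (r+s)q'(\theta)/[q(\theta)]^2 - \phi$.

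The second step is the bookkeeping that turns this raw derivative into an expression in $\eta_{M,u}$. I would substitute the equilibrium identity to eliminate the explicit $c$, replacing $(1-\phi)/c$ by $\big[(r+s)/q(\theta) + \phi\theta\big]/(y-z)$, and then multiply numerator and denominator by $\theta q(\theta)$ so that the combination $\theta q'(\theta)/q(\theta) = -\eta_{M,u}$ appears. This produces
\[
\frac{d\theta}{dy} = \frac{r+s+\phi\theta q(\theta)}{(r+s)\eta_{M,u}+\phi\theta q(\theta)}\,\frac{\theta}{y-z},
\]
and multiplying by $y/\theta$ converts the left side into $\eta_{\theta,y}$ while cleanly splitting off the second factor $y/(y-z)$, the inverse fundamental surplus fraction. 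Defining $\Upsilon$ to be the remaining first factor and adding and subtracting $(r+s)\eta_{M,u}$ in its numerator rewrites it as $\Upsilon = 1 + (r+s)(1-\eta_{M,u})/\big[(r+s)\eta_{M,u}+\phi\theta q(\theta)\big]$.

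The bounds then follow quickly. For the lower bound, Proposition \ref{prop:eta-M-u} gives $\eta_{M,u}\in(0,1)$, so the appended fraction has a strictly positive numerator and denominator and hence $\Upsilon > 1$. For the upper bound I would introduce the auxiliary function $\Upsilon(\chi) = 1 + (r+s)(1-\eta_{M,u})/\big[(r+s)\eta_{M,u}+\chi\,\phi\theta q(\theta)\big]$, observe that $\Upsilon = \Upsilon(1)$, compute $\Upsilon(0) = 1/\eta_{M,u}$, and verify $\partial\Upsilon/\partial\chi < 0$; monotonicity in $\chi$ then yields $\Upsilon(1) < \Upsilon(0) = 1/\eta_{M,u}$, completing $1 < \Upsilon < 1/\eta_{M,u}$.

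The main obstacle is the second step rather than the bounds: extracting the $\eta_{M,u}$ form from the implicit derivative requires choosing the right equilibrium identity to substitute and tracking signs carefully, since $q'(\theta) < 0$ leaves the sign of $\partial\digamma/\partial\theta$ non-obvious at first glance. The supporting work is confirming, via Proposition \ref{prop:app:unique-theta}, that the equilibrium $\theta$ is interior so that $\eta_{M,u}\in(0,1)$ and every denominator that appears is strictly positive; this is exactly what legitimizes both the division in the implicit function theorem and the strict inequalities in the final bounds.
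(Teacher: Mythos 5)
Your proposal is correct and follows essentially the same route as the paper: implicit differentiation of the same function $\digamma\left(\theta,y\right)$, substitution of the equilibrium identity to eliminate $c$, multiplication by $\theta q\left(\theta\right)/\left[\theta q\left(\theta\right)\right]$ to surface $\eta_{M,u}$, the add-and-subtract step yielding $\Upsilon = 1 + \left(r+s\right)\left(1-\eta_{M,u}\right)/\left[\left(r+s\right)\eta_{M,u}+\phi\theta q\left(\theta\right)\right]$, and the identical auxiliary-function argument $\Upsilon\left(\chi\right)$ with $\Upsilon\left(0\right)=1/\eta_{M,u}$ and $\partial\Upsilon/\partial\chi<0$ for the upper bound. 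Your appeal to proposition \ref{prop:eta-M-u} for $\eta_{M,u}\in\left(0,1\right)$ and to proposition \ref{prop:app:unique-theta} for an interior equilibrium matches the paper's logic exactly.
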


Proposition \ref{prop:eta-M-u} establishes that $1 / \eta_{M,u}$ is larger than one.

Proposition \ref{prop:app:upsilon} is the same as proposition \ref{prop:upsilon},
which is stated in the text.

\subsection{Wage Elasticity in the Canonical Model}

The elasticity of wages with respect to productivity is $dw/dy\times y/w=:\eta_{w,y}$:
\begin{align*}
\frac{\frac{\Delta w}{w}}{\frac{\Delta y}{dy}}\approx\frac{dw}{dy}\frac{y}{w}=:\eta_{w,y}.
\end{align*}
An expression for $dw/dy$ can be derived from \eqref{eq:app:w-03}:
\begin{align*}
w & =z+\phi\left(y-z+\theta c\right)\\
\therefore\frac{dw}{dy} & =\phi\left(1+c\frac{d\theta}{dy}\right)\\
 & =\phi\left[1+c\left(\frac{r+s+\phi\theta q\left(\theta\right)}{\left(r+s\right)\eta_{M,u}+\phi\theta q\left(\theta\right)}\right)\frac{\theta}{y-z}\right],
\end{align*}
where the last inequality uses the expression for $d\theta/dy$ in
\eqref{eq:app:dtheta-dy}. The equilibrium condition in \eqref{eq:app:eqm-theta}
implies
\[
\frac{y-z}{c}=\frac{r+s+\phi\theta q\left(\theta\right)}{\left(1-\phi\right)q\left(\theta\right)}.
\]
Using this expression in the developing expression for $dw/dy$ yields
\begin{align*}
\frac{dw}{dy} & =\phi\left[1+\left(\frac{r+s+\phi\theta q\left(\theta\right)}{\left(r+s\right)\eta_{M,u}+\phi\theta q\left(\theta\right)}\right)\frac{\left(1-\phi\right)\theta q\left(\theta\right)}{r+s+\phi\theta q\left(\theta\right)}\right]\\
 & =\phi\left[1+\frac{\left(1-\phi\right)\theta q\left(\theta\right)}{\left(r+s\right)\eta_{M,u}+\phi\theta q\left(\theta\right)}\right]\\
 & =\phi\left[\frac{\left(r+s\right)\eta_{M,u}+\phi\theta q\left(\theta\right)+\left(1-\phi\right)\theta q\left(\theta\right)}{\left(r+s\right)\eta_{M,u}+\phi\theta q\left(\theta\right)}\right]
\end{align*}
or
\begin{equation}
\frac{dw}{dy}=\phi\left[\frac{\left(r+s\right)\eta_{M,u}+\theta q\left(\theta\right)}{\left(r+s\right)\eta_{M,u}+\phi\theta q\left(\theta\right)}\right].\label{eq:app:dw-dy}
\end{equation}
When the worker has no bargaining power, $\phi=0$, then $dw/dy$
evaluates to $0$. When the worker has all the bargaining power, $\phi=1$,
then $dw/dy=1$.

\clearpage
\pagebreak

\section{Converting a Daily Job-finding Rate to a Monthly Job-finding Rate}
\label{sec:app:converting-daily-to-monthly}

Here I go through calculations that convert a daily rate to a monthly rate.

The probably a worker finds a job within the month is $1$ minus the probably they do not find a job.
What is the probability that a worker does not find a job?
For the first day of the month,
the probability of not finding a job is $1-\theta q\left(\theta\right)$, where $\theta q \left( \theta \right)$ is the daily job-finding probability.
Over two days, the probability is not finding a job on day $1$ and on day $2$,
which is $\left[1-\theta q\left(\theta\right)\right]^{2}$.
Over the whole month, the probability of not finding a job is $\left[1-\theta q\left(\theta\right)\right]^{30}$.
Thus, the monthly job-finding probability is
\begin{align*}
\text{monthly job-finding rate}=1-\left[1-\theta q\left(\theta\right)\right]^{30}.
\end{align*}
A similar calculation converts the daily job-filling rate to the monthly job-filling rate:
\begin{align*}
\text{monthly job-filling rate}=1-\left[1-q\left(\theta\right)\right]^{30}.
\end{align*}

\end{document}